\let\chapter\undefined
\definecolor{bgred}{RGB}{255,210,205}
\definecolor{bgblue}{RGB}{210,220,255}
\definecolor{bgyellow}{RGB}{255,255,109}
\definecolor{bggrey}{RGB}{223,223,225}
\definecolor{bgpink}{RGB}{255,202,239}
\definecolor{bgsky}{RGB}{182,219, 255}
\definecolor{purple}{RGB}{180,0,180}
\definecolor{hvygreen}{RGB}{0,128,0}
\definecolor{hvypink}{RGB}{102,0,51}
\newcommand{\name}[1]{\operatorname{#1}}
\newcommand{\funct}[2]{\operatorname{#1}\!\left({#2}\right)}
\newcommand{\To}{\!\rightarrow\!}
\newcommand{\card}[1]{\left|{#1}\right|}
\renewcommand{\epsilon}{\varepsilon}
\renewcommand{\phi}{\varphi}
\newcommand{\oofi}[1]{\operatorname{O}({#1})}
\newcommand{\subtab}{\textcd{SubTab}}
\newcommand{\tsub}[1][T]{#1_{\textrm{sub}}}
\newcommand{\textcd}[1]{{\footnotesize \textup{\textsf{#1}}}}
\newcommand{\tcol}[1]{\textit{#1}}
    \newcommand{\tova}[1]{}
    \newcommand{\amit}[1]{}
    \newcommand{\kathy}[1]{}
    \newcommand{\susan}[1]{}
    \newcommand{\annotation}[1]{[[[#1]]]}
    \newcommand{\kathy}[1]{\textbf{\small\textcolor{hvypink}{\annotation{(Kathy)~#1}}}{\typeout{#1}}}
    \newcommand{\tova}[1]{\textbf{\small\textcolor{purple}{\annotation{(Tova)~#1}}}{\typeout{#1}}}
    \newcommand{\amit}[1]{\textbf{\small\textcolor{blue}{\annotation{(Amit)~#1}}}{\typeout{#1}}}
    \newcommand{\yael}[1]{\textbf{\small\textcolor{hvygreen}{\annotation{(Yael)~#1}}}{\typeout{#1}}}
    \newcommand{\susan}[1]{\textbf{\small\textcolor{purple}{\annotation{(Susan)~#1}}}{\typeout{#1}}}
\newcommand{\eat}[1]{}
\newcommand\vldbpagestyle{plain} 
\begin{document}

\title{Selecting Sub-tables for Data Exploration}

\author{Kathy Razmadze}
\affiliation{%
  \institution{Tel Aviv University}
}
\email{kathyr@mail.tau.ac.il}

\author{Yael Amsterdamer}
\affiliation{%
  \institution{Bar-Ilan University}
  }
\email{amstery@cs.biu.ac.il}

\author{Amit Somech}
\affiliation{%
  \institution{Bar-Ilan University}
  }  
\email{somecha@cs.biu.ac.il}

\author{Susan B. Davidson}
\affiliation{%
  \institution{University of Pennsylvania}
  }
\email{susan@cis.upenn.edu}

\author{Tova Milo}
\affiliation{%
  \institution{Tel Aviv University}
  }
\email{milo@post.tau.ac.il}
\begin{abstract}
We present a framework for creating small, informative sub-tables of large data tables to facilitate the first step of data science: data exploration. Given a large data table table $T$, the goal is to create a sub-table of small, fixed dimensions, by selecting a subset of $T$'s rows and projecting them over a subset of $T$'s columns. The question is: which rows and columns should be selected to yield an informative sub-table? 

We formalize the notion informativeness based two complementary metrics: {\em cell coverage}, which measures how well the sub-table captures prominent association rules in $T$,  and {\em diversity}. Since computing optimal sub-tables using these metrics is shown to be infeasible, we give an efficient algorithm which indirectly accounts for association rules using table embedding. The resulting framework can be used for visualizing the complete sub-table, as well as for displaying the results of queries over the sub-table, enabling the user to quickly understand the results and determine subsequent queries. Experimental results show that we can efficiently compute high-quality sub-tables as measured by our metrics, as well as by feedback from user-studies.

\eat{
We start by defining a notion of informativeness based on prominent data patterns within and across columns. Briefly, we use the standard notion of association rules and define a metric based on the rules represented in the sub-table. This metric captures, intuitively, the fraction of the data represented or ``covered'' by the sub-table, and we combine it with a metric accounting for the sub-table contents diversity.  Next, we study the complexity of the optimizing our metrics and propose different solutions for sub-table selection. In particular, we show an effective algorithm based on table embedding in vector representation. The resulting framework can be used for visualizing the complete sub-table, as well as for displaying the results of queries over the sub-table, enabling the user to quickly understand the results and determine subsequent queries. We conduct an experimental study over realistic datasets and data exploration tasks.  The results indicate that we can efficiently compute high-quality sub-tables both by our intrinsic metrics as well as in terms of their usefulness in external tasks \yael{with real users}.
}

\end{abstract}

\maketitle
\pagestyle{\vldbpagestyle}

\section{Introduction}
\label{sec: intro}

Data exploration is an important first step in data analytics.  During this step, the analyst tries to understand an unfamiliar dataset and determine what part of the data is relevant to their task by displaying the table, looking at the table description, or visualizing column values.  They may also run simple exploratory queries over the dataset, using selection, projection, sorting and grouping.  
However, when displaying a large input table or query result only a small subset of the table is typically shown -- and without input from the user, the subset is arbitrary. For example, the default display of Pandas\footnote{\textcd{Pandas}: Python Data Analysis Library. \url{https://pandas.pydata.org/}} tables using the Python \textcd{display()} command includes the first and last several rows and columns. Frequently, this is not very informative as the sub-table may contain a lot of missing values and/or fail to capture the range of possible values in a column;  it may also elide columns that are important for further exploration and analysis.

\begin{figure}
    \includegraphics[width=1.0\columnwidth]{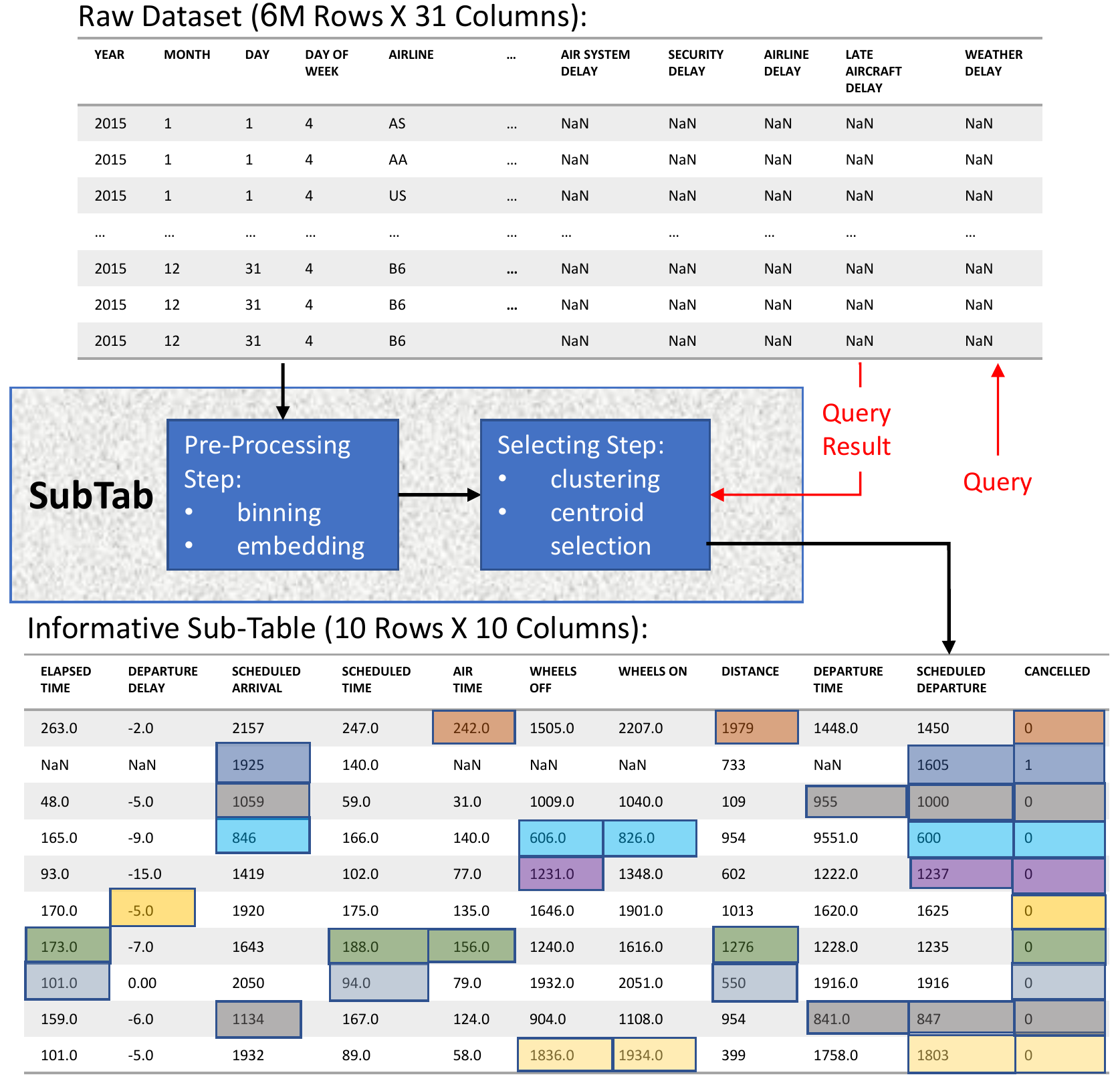}
    \vspace{-5mm}
    \caption{System Architecture}
    \vspace{-3mm}
    \label{fig:arch}
\end{figure}

\begin{example}\label{bad_example}
Consider a table $T$ taken from the Kaggle \emph{flights} dataset\footnote{https://www.kaggle.com/usdot/flight-delays?select=flights.csv} which contains 31 columns and $\sim$6M rows.  The analyst is using $T$ to predict flight cancellations, and hence is  interested in a specific target column, \tcol{CANCELLED}. The analyst starts by visually inspecting the data using Pandas \textcd{display($T$)}, which yields the table displayed at the top of Figure~\ref{fig:arch}.
This display of $T$ is not informative for the analysis task, as it does not include the target column. More importantly,  its usefulness for data exploration is limited: e.g., the last five columns contain only \textcd{NaN} values, and other columns include many repetitions of arbitrary values. \qed{}
\end{example}

Our goal is to support the data exploration task by selecting  small, informative sub-tables through which analysts can view data.
That is, given a table $T$ with $n$ rows and $m$ columns, our goal is to create a sub-table $\tsub$ with $k<<n$ rows and $l<<m$ columns which is a subset of $k$ rows  projected over a subset of $l$ columns of $T$. The sub-table should be \emph{informative} in the sense that it captures important data patterns within and across columns in~$T$, where we define patterns using the standard notion of \emph{association rules}.  The sub-table should also contain \emph{diverse} cell values out of the ones actually occurring in each selected column.
If one or more \emph{target columns} are known in advance to be the focus of the analysis, they must also be included in the~$l$ selected columns. 
 
\begin{example}\label{good_example}
 Continuing with the \emph{flights} dataset, an \emph{informative} sub-table is shown at the bottom of Figure~\ref{fig:arch}. 
 The sub-table captures several prominent association rules that hold over the input table and that include the target column  \tcol{CANCELLED}. We visualize this by highlighting, in each row, the cells that participate in a rule that holds for this row (many more rules hold, to avoid visual clutter we only highlight one rule per row). 
 For example, the first row of the sub-table exemplifies an association rule  stating that long flights ($\tcol{AIR\_TIME}\in[198.0, 422.0]$ and $\tcol{DISTANCE}\in[1546.0, 2724.0]$) are likely \emph{not} to be cancelled (highlighted in orange).
 The second row of the sub-table  exemplifies an association rule stating that short afternoon flights (according to the \tcol{SCHEDULED\_DEPARTURE}  and \tcol{SCHEDULED\_ARRIVAL} columns) are likely to be cancelled (\mbox{$\tcol{CANCELLED}=1$}, highlighted in blue).
Beyond these association rules, the sub-table gives useful insights about the column values, by showing diverse rows and diverse values per column. \qed{} 
\end{example}

As alluded to above, we formalize the notion of informativeness based on a combination of two complementary metrics: \emph{cell coverage} and \emph{diversity}. Cell coverage measures how well the sub-table $\tsub$ covers prominent\footnote{There are standard metrics we can use to measure the prominence of association rules in $T$, such as Support and Confidence~\cite{agrawal1994fast}.} association rules that hold over the input table $T$. 
Given a set of prominent association rules, we consider the rules that are captured by the sub-table as well as the marginal contribution of each rule, and combine them in a metric which reflects the number of cells in $T$ that are describable by association rules that are captured in $\tsub$. 
If one or more \emph{target columns} are known in advance to be the focus of the analysis, they will be included in the $l$ selected columns, and we measure cell coverage only according to association rules that include one or more target columns. 

For the second metric, \emph{diversity}, we rely on the average pair-wise similarity between the rows of the sub-table, using a Jaccard-like similarity measure that accounts for categorical as well as continuous columns. These metrics are combined in a \emph{score} to measure the informativeness of a sub-table.

Unfortunately, we  show that optimizing this informativeness score directly is infeasible. 
Moreover, even computing the association rules may not be practical in our setting: although there are several efficient techniques for mining association rules (e.g.,~\cite{AssociationRulesTables, agrawal1994fast,gunopulos2003discovering, omiecinski2003alternative, srikant1997mining}), they may still be overly time consuming for large datasets in an interactive setting.

We therefore consider a sub-table computation method which indirectly accounts for association rules using \textit{table embedding}~\cite{bordawekar2019exploiting,embeddings_integration_tasks,Cappuzzo_2020}. Given a table~$T$ we use \emph{binning}~\cite{AssociationRulesTables} to split each column's values into a  small set of meaningful groups. We then compute an \emph{embedding} of table cells as vectors. Several different methods have been recently proposed for this task~\cite{bordawekar2019exploiting,embeddings_integration_tasks,deng2020turl, tang2020rpt, Table2Vec2019,Cappuzzo_2020}, 
among which we chose a fast and effective embedding method based on Word2Vec~\cite{mikolov2013distributed} (and compare with other options in Section~\ref{sec: exp}). 

The embedding captures bin co-occurrences, and therefore implicitly corresponds to frequent itemsets and association rules.
To select rows and columns for a sub-table we derive from the cell vectors a vector representation for rows and columns, cluster them (separately) and select the centroids as rows and columns that represent diverse patterns in the data. 

Empirically, this method achieves near-optimal scores when  compared to upper bounds on  
coverage and diversity.

An important benefit of our solution design is in \emph{responding to queries over $T$}: during the exploratory data analysis (EDA) session, users typically issue different queries on a given table~$T$ (red arrows of Figure~\ref{fig:arch}). Our computation of cell embedding may be viewed as a part of the pre-processing step of a given data table~$T$, along with the binning of its values (first blue box in Figure~\ref{fig:arch}). This step  only has to be executed once upon loading the table. Then, if the analyst issues a selection-projection (SP) query on $T$ and wishes to view its result as a sub-table, we need only to compute the vector representation of rows and columns in $Q(T)$ based on the cells of $T$ that appear in them, and re-execute clustering and centroid selection (\emph{Selecting} step in Figure~\ref{fig:arch}, shown as the second blue box). This significantly speeds up sub-table computation compared with computing everything from scratch (a few seconds instead of up to a minute for large tables).

\begin{figure*}

    \includegraphics[width=\textwidth]{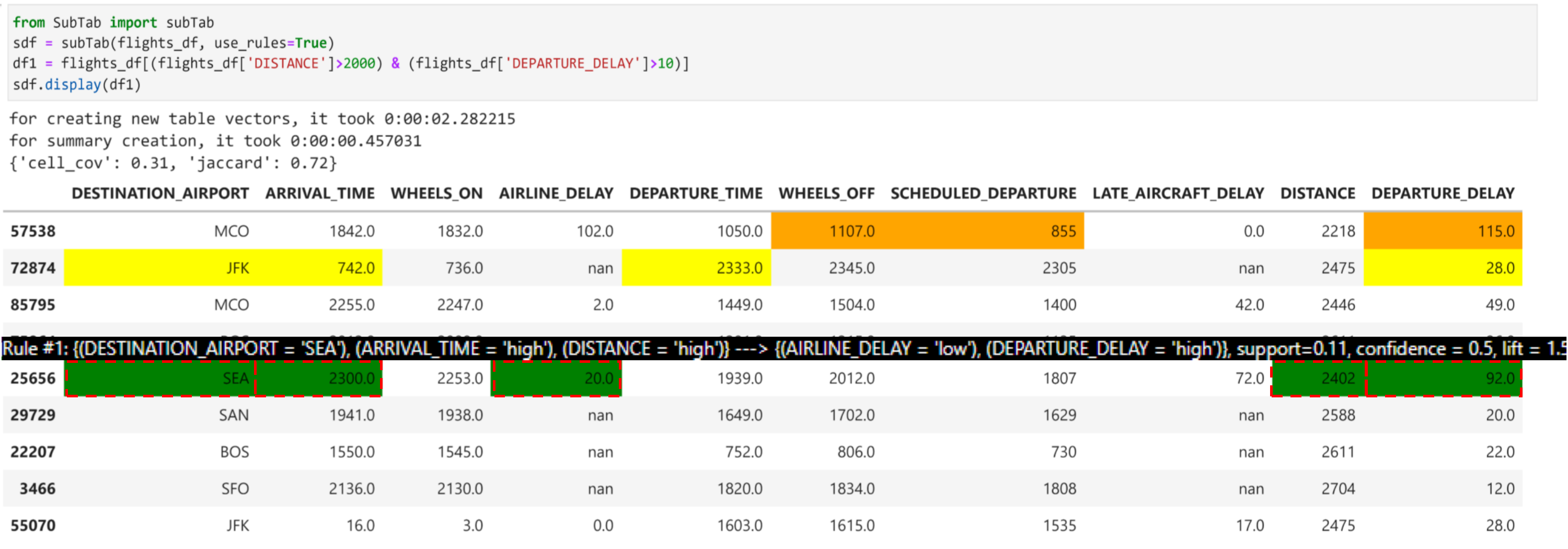}
    
\vspace{-2mm}
    \caption{Example usage of \subtab{} -- presenting an informative, 10X10 sub-table for a large query results-set}
    \label{fig:explore_3}
\end{figure*}

\paragraph*{Contributions.} The contributions of this work include:
\begin{enumerate}
\item
A notion of \emph{informativeness} for sub-tables, which includes both cell coverage and diversity.  Cell coverage measures how well the sub-table captures prominent association rules in the input table, while diversity ensures that repetition of values within the sub-table are minimized. 
\item
A {\em formalization and complexity analysis} of the problem of selecting an optimal sub-table,  which shows that computing optimal sub-tables using the informativeness metric is infeasible. 

\item A {\em greedy algorithm} for row selection which has approximation guarantees for optimal row selection, and a {\em semi-greedy algorithm} which further samples column combinations in order to find sub-tables with high cell coverage. Although the algorithm is not practical in an interactive setting, it serves as a baseline against which we compare the quality of computed sub-tables.
\item
A {\em practical algorithm}, \subtab, for computing informative sub-tables which accounts for association rules indirectly using table embedding.  The algorithm has two phases:  a pre-processing phase which performs binning and embedding, and can be executed as soon as the data is loaded; and a clustering and centroid selection phase that is called for each sub-table display, e.g., on query results.
\item An {\em implementation} for  \subtab{} 
as a local Python library that hooks into Pandas, and displays tables and query results as informative sub-tables. The implementation includes a UI which optionally highlights association rules (as in Figure~\ref{fig:arch}).
\item
{\em Experimental results} that measure the running time of \subtab{} and several baseline algorithms, as well as the quality of sub-tables produced.  Results show that the quality of sub-tables computed by \subtab{} exceed those of other interactive algorithms, and are comparable to algorithms that directly optimize our metric. 
Experiments with EDA sessions and user-studies show \subtab's effectiveness for enabling analysts to derive useful insights from the data. An analysis of its running time shows that \subtab{} is suitable for EDA sessions, and that the pre-processing step enables interactive displays of query results during the session. 

\end{enumerate}

\paragraph*{Organization.} The rest of the paper is organized as follows.  Related work is discussed in Section~\ref{sec: related}.  In
Section~\ref{sec: metrics} we define our metrics of cell-coverage and diversity. In Section~\ref{sec:complexity} we show hardness results for the problem of optimal sub-table selection, and the greedy and semi-greedy algorithms. In Section~\ref{sec:sol} we present our sub-table computation method based on table embeddings.  In Section~\ref{sec: exp} we discuss experimental results.  We conclude in Section~\ref{sec:conc}.

\section{Related work}
\label{sec: related}

\eat{There are three main lines of related work: (1) \textit{row sampling}, and (2) \textit{feature selection}, both aim to reduce the number of \textit{either} rows or columns in the data, then we further survey (3) \textit{data summarization} works, in which the goal is to generate an informative yet compressed summary of the data by employing a series of aggregations. 
}
There are three main lines of related work: (1) \textit{row sampling}, (2) \textit{feature selection}, and (3) \textit{data summarization}.
Row sampling (resp., feature selection) aims to reduce the number of rows (resp., columns) in the data, while the goal of data summarization is to generate an informative yet compressed summary of the data using a series of aggregations. 

\paragraph*{Row Sampling}
The task of sampling or selecting representative \textit{rows} from a large dataset has been studied in previous work for several different use cases. For example, work in \textit{Approximate Query Processing (AQP)} suggests using stratified sampling~\cite{agarwal2013blinkdb} and dynamic sampling~\cite{agarwal2013blinkdb,babcock2003dynamic} in order to reduce the number of tuples and produce faster yet inexact query results. 
Row sampling is used for \textit{efficiently generating data visualizations}~\cite{Visualization_aware_sampling}, i.e., reducing the number of underlying data points while minimizing the error in the produced visualization. This is typically done by using a visualization-inspired loss function~\cite{Dynamic_reduction_visual}. 
Another use case is \textit{query result diversification}~\cite{query_result_divers,liu2009structured}, in which the goal is to select rows from the query result that are both \textit{relevant} and \textit{diverse}. This is often done with a greedy algorithm which finds local-optimum solutions, given ad-hoc definitions of relevance and diversity~\cite{query_result_divers}.

In contrast, our approach \subtab{} generates a sub-table by directly selecting both rows \textit{and} columns. Also, while most of the works mentioned above are designed for a particular use case or have prior assumptions (e.g., a notion of relevance, or a specific query load~\cite{apq_kraska_2017}), \subtab{} captures generic patterns in the data, without using prior assumptions, for the purpose of producing informative, small samples of large tables. The latter allows \subtab{} to support many different data exploration use cases, information needs, and datasets.

\paragraph*{Feature Selection}
The task of reducing the number of columns in a dataset 
is an important step in many machine learning processes. The goal is typically to reduce the number of input variables considered by the ML model, which reduces the model's complexity as well as the training time~\cite{chandrashekar2014survey,info_entropy_feature_ranking}.
There is a plethora of work on feature selection (see~\cite{chandrashekar2014survey} for a survey), which can be roughly categorized as filter methods, that output the Top-k features w.r.t. a given metric (e.g., Chi-Square, ANOVA and Information-Gain)~\cite{info_entropy_feature_ranking}; as well as embedded~\cite{guyon2003introduction} and wrapper~\cite{john1994irrelevant} methods, which directly utilize the ML model to determine feature importance~\cite{cichocki2014era}.

While this work describes fundamental ML tools, they are ill-suited to our problem: first of all, they only select columns, and cannot be easily adapted to also select representative rows. 
Second, feature selection tools operate w.r.t a predefined, target column and a prediction task, which may not exist in the data exploration phase. 
As mentioned above, \subtab{} produces a sub-table by selecting both rows and columns, and does not require a target feature (although it will use target features if they are known). 

\paragraph*{Data summarization}
Another line of work attempts to derive compressed forms of the data, or produce a high-level, compact summary of the dataset.
These includes dimensionality reduction~\cite{cunningham2015linear} techniques, 
data sketches~\cite{cormode2017data} for online streams, and techniques for aggregation-focused AQP~\cite{hellerstein1999interactive,zeng2015g}.

Such summaries do preserve some properties of the original data, however they produce an \textit{altered}, compressed version. While these summaries are very useful for use cases such as AQP and feature engineering~\cite{cunningham2015linear}, they are not suitable for interactive analysis, where the user wants to see the actual data. In contrast, \subtab{} is focused on interactive data exploration, and thus efficiently generates sub-tables which contain representative rows and columns taken from the original table.

\section{Metrics}
\label{sec: metrics}

In this section, we define our metrics of cell coverage and diversity, and the problem of finding an optimal sub-table based on their combination. 
We then show that directly optimizing our metrics is infeasible in Section~\ref{sec:complexity}, and propose a practical solution in Section~\ref{sec: alg}.

\subsection{Model}
Sub-tables are formally defined as follows: A relational schema $U=\{u_1,\dots,u_{\card{U}}\}$ is a finite set of columns, such that each column $u_i$ allows values from a subset of the global domain $\mathcal{D}_i\subseteq\mathcal{D}$ (e.g., for a binary column, $\mathcal{D}_i=\{0,1\}$). A relational table over $U$  is a finite set $T\subseteq U\To\mathcal{D}$ of tuples such that $t(u_i)\in\mathcal{D}_i$ is the value of the cell in the row corresponding to a tuple $t\in T$ and the column $u_i\in U$.

\begin{definition}[Sub-table]\label{def:subtable}
Given a table $T$ over schema $U$, a sub-table $\tsub$ is a table over some schema $\tsub[U]\subseteq U$  such that each tuple $t\in \tsub$ is the projection of some tuple $t'\in T$ over the columns of $\tsub[U]$, i.e., for every $u\in\tsub[U]$, $t(u)=t'(u)$.
\end{definition}

We next define two standard notions that will be useful in the sequel: binning and association rules.

In a schema $U$, each column $u_i$ may be \emph{categorical}, namely,  $\mathcal{D}_i$ is discrete, e.g., a column of airline names; or \emph{continuous}, namely,  $\mathcal{D}_i$ is a continuous range, e.g., a column of flight distance. Moreover, in a table $T$ over $U$ a different distribution of values (e.g. uniform or skewed) may occur in each column. \emph{Binning} the column values is technique commonly used to allow a uniform treatment of columns with different ranges and distribution. Formally,

\begin{definition}[Binning]\label{def:bins}
Given a table $T$ over schema $U$, a \emph{binning} function $\mathcal{B}$ maps each column $u_i\in U$ to a finite set of bins $\mathcal{B}_i=\{B_1^i,\dots,B_{\card{\mathcal{B}_i}}^i\}$ such that for every $t\in T$, $t(u_i)$ belongs to exactly one bin $B_j^i\in \mathcal{B}_i$.
\end{definition}

We discuss in Section~\ref{sec: alg} the method we use for computing such a binning function.

\begin{example}\label{ex:binning}
In the flights dataset, the \tcol{DISTANCE} column is continuous, and so we split its range into the bins short, medium and long-distance. Depending on the column value distribution, we may obtain $B^{\tcol{DIST}}_{\textrm{long}}=[1546.0, 2724.0]$. The continuous  \tcol{AIRTIME} column may also have bins with the same labels, but different ranges matching is value distribution, e.g.,  $B^{\tcol{AT}}_{\textrm{long}}=[198, 422]$.  The \tcol{CANCELLED} column is binary, hence we can use its categories as bins. The \tcol{AIRLINE} column is also categorical but has many categories. We can create a smaller number of groups by e.g.\ splitting the  airlines according to the continent in which they are headquartered.
\end{example}

Next, we recall the notion of \emph{association rules}, which we use to capture patterns in the data.  Association rules will be used to measure and compare the quality of sub-tables.
Formally, 
\begin{definition}[Association rules~\cite{AssociationRulesTables}]
Given a table $T$ over schema $U$, an association rule $R$ has the form $\{(u_1,v_1),\dots(u_r,v_{r_R})\}\To\{(u_{{r_R}+1},v_{{r_R}+1}),\dots(u_{{r_R}+{p_R}},v_{{r_R}+{p_R}})\}$
where each
$u_i\in U$ is a column and each $v_i \in \mathcal{D}_i$ is a cell value. Denote by $U_R=\{u_1,\dots,u_{{r_R}+{p_R}}\}\subseteq U$ the set of columns used in $R$.
We say $R$ \emph{holds} for a tuple $t\in T$ if $t(u_i)=v_i$ for every $1\leq i \leq {r_R}+{p_R}$. Denote by $T_R\subseteq T$ the subset of tuples  for which $R$ holds.
\end{definition}  
Previous work includes different metrics for the quality of association rules, as well as corresponding algorithms for association rule \emph{mining}, i.e., the discovery of all association rules that meet some quality criteria (e.g.,~\cite{agrawal1994fast,gunopulos2003discovering,omiecinski2003alternative,srikant1997mining}).

In tables with diverse columns, the use of binning may improve the mined association rules~\cite{AssociationRulesTables}: given a table $T$, we can replace each cell value $t(u_i)$ with an identifier of its matching bin $B_j^i$. The resulting table would have a smaller number of distinct values per column, and each value would occur more frequently; consequently, one may be able to mine association rules that apply to many more tuples. 

\begin{example}\label{ex:rules}
Using the bins from example~\ref{ex:binning}, the association rule from Example~\ref{good_example} stating that long flights are likely \emph{not} to be cancelled can be written as: \\
\tcol{AIR\_TIME}$\in B^{\tcol{AT}}_{\textrm{long}}$,\tcol{DISTANCE}$\in B^{\tcol{DIST}}_{\textrm{long}}\To$\tcol{CANCELLED}$\in B^{\tcol{CANC}}_0$.
Similarly, the rule for flights likely to be cancelled can be written as \tcol{SCHEDULED\_DEPARTURE}$\in B^{\tcol{SD}}_{\textrm{afternoon}}$, \tcol{SCHEDULED\_ARRIVAL}$\in B^{\tcol{SA}}_{\textrm{afternoon}}\To$\tcol{CANCELLED}$\in B^{\tcol{CANC}}_1$.
\end{example}

\subsection{Informativeness Metrics}
\label{sec:cov}
We now develop quality metrics for sub-tables. The first type of metric that we develop intiutively measures how well data patterns in the full table are captured by the sub-table. 

\paragraph*{Cell coverage}
Given a sub-table $\tsub$ of table $T$ and a set a set $\mathcal{R}$ of association rules mined from $T$ (e.g., using~\cite{AssociationRulesTables}), to measure the \emph{coverage} of $\tsub$ with respect to $T$ and $\mathcal{R}$ we consider the following.
 \begin{enumerate}[(q1)]
     \item Which of the rules of $\mathcal{R}$ are \emph{covered}, i.e., captured by $\tsub$?
     \item What is the marginal contribution of each covered rule to $\tsub$'s informativeness?
     \item How do marginal contributions aggregate to an overall numerical score for $\tsub$?
\end{enumerate}

Since sub-tables include a subset of the table cells, and association rules are also defined at the level of table cells, we propose below formal definitions for q1-q3 that yield a \emph{cell coverage} metric. This metric intuitively reflects the ratio of cells in $T$ that are describable by association rules in $\mathcal{R}$ that are represented in $\tsub$. 

\begin{definition}[Cell coverage]\label{def:cov}
Let $T$ be a table, $\mathcal{R}$ a set of association rules mined from $T$, and $\tsub$ a sub-table of $T$. 
\begin{enumerate}[(d1)]
    \item A rule $R\in\mathcal{R}$ is said to be \emph{covered} by $\tsub$ if all the attributes of $R$ are in $\tsub$ ($U_R\subseteq \tsub[U]$), and there exists a tuple $t \in T_{sub}$ for which $R$ holds ($\{\tsub\}_R\neq \emptyset$). \\ Let $\tsub[\mathcal{R}]$ be the set of all rules in $\mathcal{R}$ that are covered by $\tsub$.
    \item The \emph{marginal contribution} of a rule $R\in\tsub[\mathcal{R}]$ is the subset of table cells it describes: \\
    $\funct{cell}{R,T}\colonequals \{\langle t, u\rangle\mid t\in T_R\wedge u\in U_R\}$.
    \item The \emph{cell coverage} of $\tsub$ with respect to $T, \mathcal{R}$ is denoted by
    \begin{align}\label{formula:coverage}
        \funct{cellCov_{\mathcal{R}}}{T,\tsub}\colonequals\frac{1}{\name{upcov}}\card{\bigcup_{R\in{\tsub[\mathcal{R}]}} \funct{cell}{R,T}}
    \end{align} I.e., it is the (normalized) number of cells in $T$ described by any covered rule in $R\in\tsub[\mathcal{R}]$. The normalization factor \(\name{upcov}\colonequals\card{\bigcup_{R\in\mathcal{R}}\funct{cell}{R,T}}\) is an upper bound on the number of cells that can be covered, and ensures that $\funct{cellCov_{\mathcal{R}}}{T,\tsub}\in[0,1]$. 
\end{enumerate}
\end{definition}

\begin{figure*}
\setlength{\tabcolsep}{3pt}
{\small
\begin{tabularx}{0.5\textwidth}{lllllll}
 \toprule $\hat{T}$ & \tcol{Row} &
\tcol{CANCELLED} & \tcol{DEP.\_TIME} & \tcol{YEAR} & \tcol{SCHED.\_DEP.} & \tcol{DISTANCE} \\
\midrule
&1& \cellcolor{bgpink}1& \cellcolor{bgpink}NaN & \cellcolor{bgpink}2015 & afternoon & \cellcolor{bgpink}short\\
&2&\cellcolor{bgsky}1& \cellcolor{bgsky}NaN & \cellcolor{bgsky}2015 &  \cellcolor{bgsky}afternoon & medium\\
&3&\cellcolor{bgyellow}1& \cellcolor{bgyellow}NaN & \cellcolor{bgyellow}2015 & morning & \cellcolor{bgyellow}medium\\
&4&\cellcolor{bgpink}1& \cellcolor{bgpink}NaN & \cellcolor{bgpink}2015 & \cellcolor{bgpink}morning & short\\
&5&\cellcolor{bgsky}0& \cellcolor{bgsky}morning & 2016 & \cellcolor{bgsky}morning & \cellcolor{bgsky}medium\\
&6&0& morning & 2015 & morning & medium\\
&7&\cellcolor{bgyellow}0& \cellcolor{bgyellow}evening & \cellcolor{bgyellow}2015 & evening & \cellcolor{bgyellow}long\\
&8&0& evening & 2015 & afternoon & long\\
\bottomrule
  \multicolumn{7}{l}{~}  \\[1mm]
\end{tabularx}\hspace{5mm}\begin{tabularx}{0.43\textwidth}{llllll}
\toprule $\tsub[\hat{T}^{(1)}]$ &\tcol{Row} &
\tcol{CANCELLED} & \tcol{DEP.\_TIME} & \tcol{YEAR} & \tcol{DISTANCE} \\
\midrule
&1& \cellcolor{bgpink}1& \cellcolor{bgpink}NaN & \cellcolor{bgpink}2015  & \cellcolor{bgpink}short\\
&5&\cellcolor{bgsky}0& \cellcolor{bgsky}morning & 2016 &  \cellcolor{bgsky}medium\\
&7&\cellcolor{bgyellow}0& \cellcolor{bgyellow}evening & \cellcolor{bgyellow}2015 &  \cellcolor{bgyellow}long\\
\bottomrule
\multicolumn{6}{l}{}  \\
\toprule $\tsub[\hat{T}^{(2)}]$ &\tcol{Row} &
\tcol{CANCELLED} & \tcol{DEP.\_TIME} & \tcol{YEAR} & \tcol{SCHED.\_DEP.}  \\
\midrule
&1& \cellcolor{bgpink}1& \cellcolor{bgpink}NaN & \cellcolor{bgpink}2015 & \cellcolor{bgpink}afternoon \\
&5&\cellcolor{bgsky}0& \cellcolor{bgsky}morning & 2016 & \cellcolor{bgsky}morning \\
&7&\cellcolor{bgyellow}0& \cellcolor{bgyellow}evening & \cellcolor{bgyellow}2015 & evening \\
\bottomrule
\end{tabularx}
}\caption{Example Table $\hat{T}$ with two sub-tables. {\normalfont Association rules (at most one per row) are highlighted.}}
        \label{fig:miniex}
\end{figure*}

In developing the cell coverage metric, we considered several alternative approaches. We next briefly describe some of these through an example, to illustrate the benefit of cell coverage.

\paragraph*{Alternative coverage metrics}
Consider the example table $\hat{T}$ on the left of Figure~\ref{fig:miniex}, illustrating some of the trends in the \emph{flights} dataset mentioned above. The table values represent bin names (e.g. ``short", ``medium", ``long"). Assume a set $\mathcal{R}$ of all association rules with column \tcol{CANCELLED} on the right, and at least two columns on the left, that hold for at least two rows. As an example, the rule \tcol{DEP.\_TIME}$=NaN$, \tcol{YEAR}$=2015\To$\tcol{CANCELLED}$=1$ applies to rows 1-4. For convenience, rules of maximal size are highlighted; each highlighted line illustrates a different rule, with colors alternating for clarity.

First, observe that rows with \tcol{CANCELLED}$=$1 are more similar to each other (have more values in common) compared with rows with \tcol{CANCELLED}$=$0 due to the fact that many fields do not apply (i.e.\ are NaN) when a flight is cancelled. 

As a result, there are~13 association rules for the first~4 rows, and only~8 for the last~4. This issue is exacerbated when there are many more columns with fixed values, leading to overlapping rules including subsets of these columns. Hence, using the number of rules covered as a measure of sub-table quality would lead to selecting many representatives from repetitive patterns, whereas we would like to see representatives from different areas of the data. Consequently, we propose to use measures based on \emph{data coverage} rather than rule coverage. 

Next, consider the two 
sub-tables on the right, $\tsub[\hat{T}^{(1)}]$ and $\tsub[\hat{T}^{(2)}]$, which differ only in the last attribute. These are ``good'' sub-tables, in the sense that they cover at least one rule for each tuple of $\hat{T}$. If we chose to use a row coverage metric, they would therefore have the same score. However,  $\tsub[\hat{T}^{(1)}]$ covers larger rules (two of size~4 and one of size~3) compared with $\tsub[\hat{T}^{(2)}]$ (two of size~3 and one of size~4). Accordingly, $\tsub[\hat{T}^{(1)}]$ describes~28 cells of $\hat{T}$, whereas $\tsub[\hat{T}^{(2)}]$ describes only~26. By normalizing these numbers, we would get cell coverage of~0.78 and~0.72, respectively, since~36 cells in total can be described by association rules. This motivates us to use a \emph{cell-based metric}. 

Finally, we note that, in $\tsub[\hat{T}^{(1)}]$, if we chose row~3 instead of~1 and row~6 instead of~5 we would have the same cell coverage. However, the sub-table would be more repetitive, containing only~2015 in the year field and two instances of medium distance. This demonstrates that coverage should be accompanied by a diversity metric, which we discuss next.

\paragraph*{Diversity} To generalize the example shown above, sub-tables with high cell coverage may seem repetitive to humans, since: 1) overlapping covered association rules may lead to repeating values in the sub-table, and 2) values that do not participate in rules are ignored. We therefore combine cell coverage with a diversity metric based on pairwise Jaccard similarity. As with cell coverage, binning the data values is useful in making our diversity metric more meaningful, and we consider two values in the same bin to be similar.

\begin{definition}[Diversity metric]\label{def:divers}
The \emph{similarity} of two tuples $t,t'\in \tsub$ is the ratio of cells which fall in the same bin, formally,
\[\funct{Jaccard}{t,t',\tsub}\colonequals\frac{\card{\{u_i\in\tsub[U]\mid \exists B_j^i\in\mathcal{B}(u_i).~ t(u_i),t'(u_i)\in B\}}}{\card{\tsub[U]}}\]
We then define the diversity of $\tsub$  as the complement of the average similarity between its tuples, namely,
\begin{align}\funct{divers}{\tsub, \mathcal{B}}\colonequals1 - \name{avg}_{t,t'\in\tsub}\funct{Jaccard}{t,t',\tsub}
\end{align}
\end{definition}

\begin{figure}
\setlength{\tabcolsep}{3pt}
{\small
\begin{tabularx}{\columnwidth}{llllll}
\toprule $\tsub[\hat{T}^{(3)}]$ &\tcol{Row} &
\tcol{CANCELLED} & \tcol{DEP.\_TIME} & \tcol{SCHED.\_DEP.} & \tcol{DISTANCE} \\
\midrule
&1& \cellcolor{bgpink}1& \cellcolor{bgpink}NaN  & afternoon & \cellcolor{bgpink}short\\
&5&\cellcolor{bgsky}0& \cellcolor{bgsky}morning & \cellcolor{bgsky}morning & \cellcolor{bgsky}medium\\
&7&\cellcolor{bgyellow}0& \cellcolor{bgyellow}evening & evening & \cellcolor{bgyellow}long\\
\bottomrule
\end{tabularx}}\caption{Example of a diverse sub-table}
        \label{fig:subtable3}
\end{figure}

\begin{example}
Consider again Table~$\tsub[\hat{T}^{(1)}]$ from Figure~\ref{fig:miniex}. Here the shown values are already bin names, so to compute the sub-table diversity we need to compute the ratio of identical cells in each pair of rows. In this case, the only overlaps are in \tcol{CANCELLED} in rows 5 and 7, and the \tcol{YEAR} of rows 1 and 7, yielding a diversity $\funct{divers}{\tsub[\hat{T}^{(1)}], \mathcal{B}}=1-\funct{avg}{0.25,0,0.25}= 0.83$. Figure~\ref{fig:subtable3} shows an even more diverse sub-table, with $\funct{divers}{\tsub[\hat{T}^{(3)}], \mathcal{B}}=1-\funct{avg}{0,0,0.25}= 0.92$, achieved by excluding the repetitive \tcol{YEAR} column. However, this table has a smaller cell coverage, describing only~24 cells, which indicates that there is a trade-off between cell coverage and diversity.
\end{example}

\paragraph*{Optimization problem}
Based on the metrics defined above, we define the optimization problem, \texttt{OPT-SUB-TABLE}. This problem computes a sub-table of a predefined size which balances high cell coverage and diversity. Moreover, we allow the user to specify target columns of interest  and focus the  sub-table on the target columns by including them in the chosen columns and by considering only association rules that include at least one of the target columns.

More formally, we are given as input a table~$T$ over schema $U$, dimensions~$k,l$ (the number of rows and columns, respectively), a set of target columns $U^*\subseteq U$ such that $\card{U^*}\leq l$,  a set of association rules~$\mathcal{R}$ mined from $T$, a binning $\mathcal{B}$  as in Def.~\ref{def:bins} and a parameter $\alpha\in[0,1]$ which is used to balance coverage and diversity (by default, $\alpha=0.5$). If there are target attributes ($U^*\neq\emptyset$), we retain only the rules the contain them:
$\mathcal{R}^*\colonequals\{R\in\mathcal{R}\mid \{u_{r_1},\dots, u_{{r_R}+{p_R}}\}\cap U^*\neq \emptyset\}$. Otherwise we retain all rules: $\mathcal{R}^*=\mathcal{R}$. Our goal is to find a $k\times l$ sub-table $\tsub$ that includes the target attributes ($U^*\subseteq \tsub[U]$), and that maximizes the following score among all such tables:
\begin{multline}
    \funct{combined}{\tsub,T,\mathcal{R}^*, \alpha} = \\
    \alpha\cdot \funct{cellCov_{{\mathcal{R}^*}}}{T,\tsub}+(1-\alpha)\cdot\funct{divers}{\tsub,\mathcal{B}}
\end{multline}
 
 \newcommand\mycommfont[1]{\rmfamily{#1}}
\SetCommentSty{mycommfont}
\begin{example}
Consider again sub-tables $\tsub[\hat{T}^{(1)}]$ and $\tsub[\hat{T}^{(3)}]$ in Figures~\ref{fig:miniex}-\ref{fig:subtable3}. Their combined scores for $\alpha=0.5$ are:\\
$0.5\cdot\sfrac{28}{36}+0.5\cdot0.83 = 0.80$ and \\
$0.5\cdot\sfrac{24}{36}+0.5\cdot 0.92=0.79$, respectively. \\
In fact, $\tsub[\hat{T}^{(1)}]$ is the optimal sub-table for this example.
\end{example}

Unfortunately, we will show in the next section that directly optimizing this problem is infeasible, and therefore give in Section~\ref{sec: alg} a practical solution that indirectly accounts for association rules using table embedding.

\section{Complexity}\label{sec:complexity}

We now analyze the complexity of \texttt{OPT-SUB-TABLE}. For simplicity, we will ignore the use of target columns and of binning, since in the extreme case the set of target columns may be empty and binning may assign each value to a separate bin. We mostly focus here on the sub-problem of \texttt{MAX-CELL-COVER}, namely, finding the sub-table with maximal cell coverage, i.e. solving \texttt{OPT-SUB-TABLE} with $\alpha=1$.  We denote by \texttt{DEC-CELL-COVER}  the corresponding decision problem: given a table~$T$ over schema $U$, dimensions~$k,l$, a set of association rules~$\mathcal{R}$ and a threshold $\Theta$, decide whether there exists a sub-table $\tsub$ of size  $k\times l$  whose cell coverage is $\funct{cellCov_{{\mathcal{R}}}}{T,\tsub}\geq \Theta$.

\subsection{Hardness}\label{sec:hardness}
Let $n$ and~$m$ be the numbers of tuples and columns, respectively, in the input table~$T$. A brute-force algorithm can theoretically traverse all $\oofi{n^k\cdot m^l}$ sub-tables of size $k\times l$ and find the one with maximal score. While this algorithm is polynomial in the size of $T$, it is infeasible due to the exponential dependency in $k,l$. For reasonable dataset and sub-table sizes, e.g., for $n=10,000$ and $m,k,l=5$, we need to check $10^{23}$ sub-tables. 

Since $k$ is small, a solution that is exponential in~$k$ may still be feasible, but not when the basis is~$n$. Thus, we examine whether our problem is \emph{fixed-parameter tractable}: FPT is class of fixed-parameter tractable problems, defined by having a solution in time $\oofi{\funct{poly}{n}\cdot f(k)}$ where $n$ is the input size, $f$ is a function and $k$ is a parameter. Unfortunately, we can show that our problem is probably not in FPT by the following proposition.

\begin{proposition}\label{prop:w1hardness}
Given a $n\times m$ table $T$ over schema $U$, and sub-table dimensions $k,l$.  
\texttt{Dec-Cell-Cover} is W[2]-hard with respect to $n=\card{T}$ and~$k$ as a parameter, assuming $m,l=\oofi{n}$.
\end{proposition}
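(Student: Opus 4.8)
The plan is to establish W[2]-hardness by an FPT-reduction from \textsc{Dominating Set}, which is W[2]-complete when parameterized by the size $k$ of the sought dominating set. \textsc{Dominating Set} is the natural source here because both its universe and its collection of covering ``sets'' are indexed by the vertex set $V$, so the reduction will automatically produce instances with $m=\oofi{n}$, matching the hypothesis; reducing from general \textsc{Set Cover} would not control the number of columns. Given a graph $G=(V,E)$ and parameter $k$, I would construct a table $T$ with one row $t_v$ per vertex $v\in V$ (so $n=\card{V}$) and columns $\{a_w\mid w\in V\}\cup\{z\}$ (so $m=\card{V}+1=\oofi{n}$). I set $t_v(z)=1$ for every $v$, and $t_v(a_w)=1$ iff $w\in N[v]$ (the closed neighborhood), and $t_v(a_w)=0$ otherwise. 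For each $w\in V$ I introduce the rule $R_w=\{(a_w,1)\}\To\{(z,1)\}$, so that $U_{R_w}=\{a_w,z\}$ and $T_{R_w}=\{t_v\mid w\in N[v]\}=\{t_v\mid v\in N[w]\}\neq\emptyset$. Finally I set the requested dimensions to $k$ rows and $l=m$ columns, and the threshold to $\Theta=1$.

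The key observation is that forcing $l=m$ makes the column condition (d1) trivially satisfied for every rule, so a rule $R_w$ is covered by a chosen sub-table exactly when some selected row $t_v$ satisfies it, i.e.\ when $w\in N[v]$ for some selected $v$. Hence, if $D\subseteq V$ is the set of vertices corresponding to the selected rows, the covered rules are precisely $\{R_w\mid w\text{ is dominated by }D\}$. I would then argue that $\funct{cellCov_{\mathcal{R}}}{T,\tsub}=1$ iff $D$ is a dominating set: since the cells $T_{R_w}\times\{a_w\}$ lie in the column $a_w$ used by no other rule and are nonempty, the full cell set $\bigcup_{R\in\mathcal{R}}\funct{cell}{R,T}$ can be recovered only by covering every single $R_w$, which happens iff every $w\in V$ is dominated. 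Thus a $k\times l$ sub-table with cell coverage $\geq\Theta=1$ exists iff $G$ has a dominating set of size $k$ (using monotonicity of domination to pass between ``at most $k$'' and ``exactly $k$'').

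Since the construction runs in time polynomial in $\card{V}$ and maps the parameter $k$ to itself, it is a valid parameterized reduction, and the produced instances satisfy $m,l=\oofi{n}$; together with the W[2]-completeness of \textsc{Dominating Set} this yields the claim. I expect the main subtlety to lie not in the gadget itself but in two places. First, I must verify that $\funct{cell}{R,T}$ ranging over all of $T_R$ (rather than only the selected rows) does not weaken the reduction; this is exactly why covering a rule needs only a single selected witness while still contributing that rule's entire, uniquely-owned block of cells, so that ``one selected dominator suffices'' aligns with the coverage semantics. Second, I must guarantee the column bound $m=\oofi{n}$, which is precisely the reason for choosing \textsc{Dominating Set} over \textsc{Set Cover}, tying the number of ``elements'' to be covered to the number of rows.
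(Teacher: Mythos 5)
Your proof is correct and follows essentially the same route as the paper: a parameterized reduction from \textsc{Dominating Set} with one row and one column per vertex, unary rules keyed to each column, and a full-width sub-table so that only row selection matters, making full cell coverage equivalent to domination. The only differences are cosmetic refinements on your part --- the auxiliary column $z$ gives the rules a non-empty consequent (the paper uses rules with empty right-hand sides), and you make the threshold $\Theta=1$ and the uniquely-owned-cell argument explicit where the paper leaves them implicit.
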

\begin{proof}
\texttt{Dominating Set} is the problem of, given an un-directed graph of $n$ vertices, select $k$ vertices such that every vertex in the graph is either in the selected set or is connected by an edge to a vertex in the set. This problem is known to be W[2]-complete~\cite{downeyF1999}. The W-hierarchy is a hierarchy of classes defined by properties of the translation of problems into combinatorial circuits. It is known that FPT$=$W[0]$\subseteq$W[1]$\subseteq$W[2]$\subseteq\dots$, and conjectured that this hierarchy is strict, i.e., W[0]$\subset$W[1]$\subset$W[2]$\subset\dots$. We show a reduction from Dominating Set  to \texttt{Dec-Cell-Cover}, thus showing the latter is also W[2]-hard and hence not in FPT unless FPT=W[2].

\texttt{Dominating Set}$\leq$\texttt{Dec-Cell-Cover}. Given a graph $G=(V,E)$, define a table $T$ with a tuple $t_v$ for every $v\in V$, and an attribute $u_v$ for every $v\in V$, such that $t_v(a_u)=1$ iff $(u,v)\in E$ or $u=v$, and is NULL otherwise. Let $\mathcal{R}$ be composed of  association rules of the form $\{u_v\in\{1\}\}\To\{\}$ is in  for every $u_v$, and seek a summary of size $k\times \card{V}$, i.e., no column selection is required. In this case, $k$ tuples that cover all the non-NULL cells in~$I$ correspond to a dominating set of size~$k$, and vice versa. 
\end{proof}

The above proof assumes that $m,l=\oofi{n}$, i.e., the number of attributes is large. However, in practical cases it is often assumed that $m<<n$. In this case, our reduction does not apply, and in particular, if $m=\oofi{1}$, a dominating set with bounded degree is not W[2]-hard.  However, we can show NP-hardness in~$k$.

\begin{proposition}\label{prop:nphardness}
\texttt{Dec-Cell-Cover} is NP-hard in~$k$, the number of tuples selected for the summary,  even assuming the number of attributes $m=\oofi{1}$.
\end{proposition}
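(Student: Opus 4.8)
The plan is to prove NP-hardness by a polynomial reduction from \texttt{3-Dimensional Matching} (3DM), a classical NP-complete problem: we are given disjoint sets $X,Y,Z$ with $\card{X}=\card{Y}=\card{Z}$ and a collection of triples $\mathcal{T}\subseteq X\times Y\times Z$, and we must decide whether there exist $k$ pairwise-disjoint triples (for $k=\card{X}$, a perfect matching). The structural observation driving the choice of 3DM is this: since $m=\oofi{1}$, column selection is trivial (just take $l=m$, so every rule's attributes are always present), and a single selected tuple $t$ can trigger only the rules whose $(\text{column},\text{value})$ pairs form a subset of $t$'s own cells, i.e.\ at most $2^m=\oofi{1}$ rules. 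Hence each selected row behaves like a \emph{bounded-size} set in a coverage problem, and the hardness must be encoded in the row choice. This is exactly why a bounded-occurrence, tripartite coverage problem such as 3DM is the right source: its three parts let each element live in a \emph{fixed} column, so that covering an element always triggers the \emph{same} rule regardless of which triple is chosen.

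The construction uses $m=3$ columns $A,B,C$, one per part. For each triple $(x,y,z)\in\mathcal{T}$ I create a \emph{selector} row $\sigma$ with $\sigma(A)=x,\ \sigma(B)=y,\ \sigma(C)=z$ (element identifiers serve as cell values). The rule set $\mathcal{R}$ consists only of the single-attribute rules $\{(A,x)\}\To\{\}$ for $x\in X$, $\{(B,y)\}\To\{\}$ for $y\in Y$, and $\{(C,z)\}\To\{\}$ for $z\in Z$ (no pair or triple rules, which would entangle values across parts). To give each element a clean, degree-independent ``gadget'', I also add, for each element $e$, a block of $W$ \emph{padding} rows carrying value $e$ in $e$'s part-column and fresh, rule-free dummy values in the other two columns. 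With this setup, a cell in column $A$ with value $x$ is covered precisely when some selected row has $A=x$, i.e.\ when $x$ is covered by the chosen triples, and similarly for $B,C$; dummy cells are never covered. I set the budget to $k$, take $l=3$, and set the coverage threshold to $W\cdot(3k)$ (normalized by $\name{upcov}$), so that a ``yes'' instance corresponds to $k$ triples covering all $3k$ of their elements, which forces them to be pairwise disjoint.

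The correctness argument computes the total covered-cell count of a selection $P$ as $W\cdot\card{\text{covered elements}} + \Pi$, where $\Pi$ is the ``pollution'' contributed by the selector rows' own cells, satisfying $0\le\Pi\le 3N$ with $N=\card{\mathcal{T}}$ (each triple adds at most $1$ to the degree of each of its $3$ elements). I expect the main obstacle to be precisely this pollution term: because every triggering row, including selectors, contributes its own cells to $\name{cell}(R,T)$, the per-element contribution is not literally uniform, which could let a non-disjoint selection covering few but high-degree elements beat a genuine matching. The fix is to choose $W=3N+1$ so the gadget term dominates: covering one extra element gains $W>3N\ge\Pi$, hence any selection with $<3k$ covered elements has coverage $\le W(3k-1)+3N<W\cdot 3k$, while a disjoint $k$-matching achieves $\ge W\cdot 3k$. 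It remains to check that picking padding rows cannot cheat (each covers a single element, so triples weakly dominate them, leaving the maximum number of covered elements unchanged) and that the instance size $N+W\cdot\card{X\cup Y\cup Z}$ is polynomial. Together these establish that \texttt{Dec-Cell-Cover} is NP-hard in $k$ even for $m=\oofi{1}$.
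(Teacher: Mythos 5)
Your proof is correct, but it takes a genuinely different route from the paper. The paper reduces from \textsc{Vertex Cover} on graphs of maximum degree~3: it builds one tuple per vertex and, via a greedy edge-to-attribute assignment (possible with 5 attributes because each vertex meets at most 3 edges), places each edge's identifier in a common attribute of its two endpoint tuples; each edge becomes a single-attribute rule whose cell set is exactly the two endpoint cells, so covering all describable cells is literally hitting every edge. Because every rule there contributes exactly two cells, no weighting is needed. You instead reduce from \textsc{3-Dimensional Matching} with one column per part, one selector row per triple, and per-element rules; the price of this cleaner $m=3$ layout is that an element's rule contributes $W+\deg(e)$ cells rather than a uniform amount, which you correctly neutralize with the padding blocks and the choice $W=3N+1$ so that the element count dominates the degree ``pollution.'' Your threshold argument (coverage $\geq W\cdot 3k$ forces $3k$ covered elements, which with $k$ rows of arity at most $3$ forces pairwise-disjoint selector rows) is sound, the padding rows are indeed dominated by selector rows, and the instance size is polynomial. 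What each approach buys: the paper's construction is shorter and avoids padding entirely by exploiting the uniform two-cell rules of the edge gadget, while yours uses fewer columns ($3$ versus $5$), avoids the slightly delicate edge-to-attribute packing step that the paper leaves implicit, and makes the ``each selected row triggers only $\oofi{1}$ rules'' intuition for why bounded $m$ still leaves the row-selection problem hard more explicit.
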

\begin{proof}
The proof is by a reduction from \texttt{Vertex Cover}, which is known to be hard in~$k$ for the selection of~$k$ vertices that are adjacent to every edge in a given graph, even if the maximal degree of a vertex is~3. Given a graph $G=(V,E)$ we assign each edge $e\in E$ a serial number $\funct{num}{e}$ and define an input table $I$ with~5 attributes and~$n$ tuples, such that for each edge $(u,v)\in E$ there exists tuples $t_u,t_v\in I$ and an attribute $a$ where $t_u(a)=t_v(a)=\funct{num}{(u,v)}$. Since each $v\in V$ appears in at most~3 edges, the other edges of $u$ and $v$ occupy at most~4 attributes of $t_u$ and $t_v$, and hence we can use the fifth attribute for $\funct{num}{(u,v)}$. Now, similarly to the proof of Prop.~\ref{prop:w1hardness}, we will construct the rules such that $\{a\in\{\funct{num}{(e}\}\}\To\{\}$ is a rule in $\mathcal{R}$, and so, the maximum coverage summary of size $k\times 5$ covers all the non-NULL cells iff the vertices corresponding to its tuples form a vertex cover in $G$.\hfill
\end{proof}

\paragraph*{Hardness of diversity optimization} We have so far focused on the hardness of cell coverage maximization (corresponding to solving \texttt{OPT-SUB-TABLE} with $\alpha=1$). However, we note that diversity maximization (corresponding to \texttt{OPT-SUB-TABLE} with $\alpha=0$) is also hard:  this problem (without column selection) was proven to be NP-hard using an analogous diversity definition in the context of selecting a diverse group of crowd workers~\cite{wu2016hear}, and the results hold for our setting as well. Since \texttt{OPT-SUB-TABLE} is strictly harder than both sub-problems, we can conclude that it is NP-hard.

\subsection{Approximate Solutions and Limitations}\label{sec:approx}
Given the hardness results above, we consider approximate solutions to \texttt{Max-Cell-Cover}, i.e., computing sub-tables with approximately-optimal score. 
In some cases, it may be feasible to enumerate all possible combinations of selecting~$l$ attributes for the summary, e.g., when $m=\oofi{1}$ or~$l$ and~$m$ are very close. In such cases, we have an approximation algorithm, as stated by the following proposition. 

\SetCommentSty{rmfamily}
\setlength{\textfloatsep}{2pt}
\begin{algorithm}[t]
   \SetNlSty{scriptsize}{}{}
   \SetNlSkip{0mm}
   \SetKwProg{myfunc}{ColumnSelection}{
   }{}
   \myfunc{$(T, k ,l, \mathcal{R})$ \textup{\textrm{// Table, dimensions and association rules}}}{
   \nl $\tsub[T^*]\leftarrow \emptyset,~ \name{cov}^*\leftarrow -1$\;
   \lnl{ln:enum} \For{$U'\subseteq U$ such that $\card{U}=l$ }{
        \nl $T'\leftarrow \pi_{_{U'}}T$\tcp*{Projection of $T$ on $U'$}
        \nl $\tsub, \name{cov}\leftarrow\textbf{GreedyRowSelection}(T', k, \mathcal{R})$\;
        \nl \lIf{$\name{cov}> \name{cov}^*$}{$\name{cov}^*\leftarrow \name{cov},~ \tsub[T^*]\leftarrow\tsub$}
    }
    \nl \Return $\tsub[T^*]$\;}{}
    \vspace{1mm}
   \SetKwProg{myfunc}{GreedyRowSelection}{}{}
   \myfunc{$(T',k,\mathcal{R})$}{
   \nl $\tsub[T^{**}]\leftarrow\emptyset,~ \name{cov}^{**}\leftarrow-1$\;
   \nl \For{$i$ in $1\dots k$}{
   \nl $\tsub[T^*]\leftarrow\tsub[T^{**}],~ \name{cov}^*\leftarrow\name{cov}^{**}$\;
   \nl \For{$t\in T'-\tsub[T^*]$}{
    \nl $\name{cov}\leftarrow\funct{cellCov_{{\mathcal{R}}}}{T,\tsub}$\;
    \nl  \lIf{$\name{cov}> \name{cov}^*$}{$\name{cov}^*\leftarrow \name{cov},~ \tsub[T^*]\leftarrow\tsub$}
   }
   \nl $\name{cov}^{**}\leftarrow \name{cov}^*,~ \tsub[T^{**}]\leftarrow\tsub[T^*]$\;
    }
   \nl\Return $\tsub[T^{**}],\name{cov}^{**}$\;}
\caption{Greedy Sub-Table Selection.}
 \label{alg:greedy}
\end{algorithm}

\begin{proposition}\label{prop:greedyapprox}
Given a table $T$ over $U$, where $\card{T}=n$ and $\card{U}=m$ and a set $\mathcal{R}$ of association rules, Algorithm~\ref{alg:greedy} computes a $t\times l$ sub-table $\tsub$ such that $\funct{cellCov_{{\mathcal{R}}}}{T,\tsub}\geq(1-\frac{1}{e})\name{OPT}$ where $\name{OPT}$ is the score of the optimal solution to  \texttt{Max-Cell-Cover}.
\end{proposition}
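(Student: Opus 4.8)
The plan is to derive the bound from the classical greedy guarantee of Nemhauser, Wolsey and Fisher for maximizing a monotone submodular set function subject to a cardinality constraint. The outer \textbf{ColumnSelection} loop enumerates every column set $U'\subseteq U$ with $\card{U'}=l$ and returns the best sub-table found, so it suffices to prove that, for each fixed $U'$, the inner \textbf{GreedyRowSelection} returns a set of $k$ rows whose cell coverage is at least $(1-\tfrac1e)$ times the best achievable with those columns; taking the maximum over all $U'$ then yields the claim, because every $k\times l$ sub-table --- in particular the optimal one --- uses some column set of size $l$, and the positive constant $(1-\tfrac1e)$ passes through the outer maximization.

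Fix $U'$ with $\card{U'}=l$ and let $T'=\pi_{U'}T$. I would view cell coverage as a set function $f$ on subsets $S$ of the rows of $T'$, where $f(S)$ is the \emph{un-normalized} number of cells of $T$ covered by the sub-table consisting of the rows $S$ over the columns $U'$. The normalization factor $\name{upcov}$ in Definition~\ref{def:cov} does not depend on $S$, so maximizing $\funct{cellCov_{\mathcal{R}}}{T,\tsub}$ is identical to maximizing $f$, and the approximation ratio is unaffected.

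The key step is to show that $f$ is monotone and submodular. The crucial observation, which I would isolate as a small lemma, is that by clause (d1) a rule $R$ with $U_R\subseteq U'$ is covered by a row set $S$ exactly when there exists a \emph{single} tuple $t\in S$ for which $R$ holds; thus $R$ is covered by $S$ iff it is covered by some singleton $\{t\}$ with $t\in S$. Consequently, if for each tuple $t$ I define the cell set $g(t)\colonequals\bigcup\{\funct{cell}{R,T}\mid U_R\subseteq U',\ R\text{ holds for }t\}$, then the cells covered by $S$ are exactly $\bigcup_{t\in S}g(t)$, so $f(S)=\card{\bigcup_{t\in S}g(t)}$. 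This is a (uniformly weighted) coverage function, which is well known to be monotone and submodular, with $f(\emptyset)=0$. I expect this ``covered by a singleton'' reduction to be the real crux of the argument, since it is what makes the objective a union-of-sets function; if coverage instead required, say, several witnessing tuples, submodularity could fail.

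Finally I would observe that \textbf{GreedyRowSelection} is precisely the standard greedy procedure: at iteration $i$ it adds the tuple whose inclusion maximizes $f$, i.e.\ the tuple of largest marginal gain $f(S\cup\{t\})-f(S)$, over $k$ iterations starting from the empty set. Applying the Nemhauser--Wolsey--Fisher theorem to the monotone submodular $f$ with cardinality bound $k$ gives that the returned row set has $f$-value at least $(1-\tfrac1e)\max_{\card{S}=k}f(S)$, i.e.\ at least $(1-\tfrac1e)$ times the optimal cell coverage attainable with columns $U'$. Writing $\name{OPT}$ as the maximum of this quantity over all $U'$ of size $l$, and using that the algorithm keeps the best column set, the returned sub-table has cell coverage at least $(1-\tfrac1e)\name{OPT}$, as claimed. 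Beyond the submodularity lemma, the remaining steps are a direct invocation of the standard greedy guarantee together with the elementary column-enumeration maximization described above.
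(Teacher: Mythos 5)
Your proposal is correct and follows essentially the same route as the paper's proof: enumerate all column subsets of size $l$, show that for fixed columns the cell-coverage objective is monotone and submodular in the row set, and invoke the Nemhauser--Wolsey--Fisher greedy guarantee, which survives the outer maximization. The only difference is one of detail: where the paper merely asserts submodularity in a parenthetical, you justify it by reducing coverage to a union-of-sets function via the observation that a rule is covered iff some single selected tuple witnesses it --- a worthwhile strengthening of the same argument, not a different one.
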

\begin{proof}
Algorithm~\ref{alg:greedy} includes two functions. The ColumnSelection enumerates over the possible column selections and for each computes a sub-table using the function GreedyRowSelection. The latter function iteratively attempts to add each single row to the current sub-table, computes the cell coverage score and records the sub-table with maximal cell coverage. This is repeated $k$ times to select~$k$ rows in total. Thus, for each column selection we greedily compute a sub-table, and among all column combinations we take the best one.

To prove the approximation bound, note that for a fixed set of attributes, the $\name{cellCov_{{\mathcal{R}}}}$ function is non-negative, monotone and submodular with respect to tuples (adding a tuple only increases the score, and as we add tuples the marginal contribution of tuples can only decrease). Therefore, by the well-known result of~\cite{nemhauser1978analysis} a greedy algorithm approximates the optimum by the above-mentioned multiplicative factor. Since we enumerate over all column combinations of size~$l$, we achieve this ratio in particular for the same column selection as the optimal sub-table with cell coverage $\name{OPT}$.
\end{proof}

By the analysis above, the greedy Algorithm~\ref{alg:greedy} is clearly not feasible in the general case due to the need to enumerate all ${m \choose l}$ options for column selection. Note that we cannot greedily select columns, since  the cell coverage metric is \emph{not} sub-modular with respect to columns, due to multi-column association rules.
The algorithm also does not take diversity into account.

In Section~6 we show that even a ``semi-greedy'' variation of Algorithm~\ref{alg:greedy}, which traverses the column combinations in a random order, can take more than two days to run on an industrial-grade server and is therefore still impractical.  Furthermore, halting the algorithm after some fixed time period before enumerating all ${m \choose l}$ possibilities for column selection is not guaranteed to meet the approximation guarantee.

\section{Practical solution}
\label{sec:sol}
As discussed in Section~\ref{sec: metrics}, an ideal sub-table captures a large {and} diverse set of patterns in the full table.
However, optimizing based on a combined score of \textit{cell coverage} and \textit{diversity} is NP-hard, and even approximated solutions are impractical for interactive exploration (Section~\ref{sec:complexity}). 
Therefore, to generate good sub-tables in interactive times (up to several seconds) we take a different approach, based on {\em tabular embeddings}.  The embedding captures bin co-occurrences, and therefore roughly corresponds to frequent itemsets and association rules.

We begin by describing our sub-table selection algorithm, and then discuss its benefits compared to other approaches. 

\subsection{Sub-table Selection}
\label{sec: alg}

\setlength{\textfloatsep}{2pt}
\begin{algorithm}[t]
   \SetNlSty{scriptsize}{}{}
   \SetNlSkip{0mm}
  \SetKwProg{myfunc}{Pre-processing}{}{}
      \myfunc{$(\tilde{T})$  \textup{\textrm{// Raw table}}}{
      \nl $T\leftarrow$ normalize and bin $\tilde{T}$\;

          \vspace{1mm}

   \nl $S\leftarrow$ rows and columns of $T$ as text\;
   \nl $\mathcal{M}\leftarrow\textcd{Word2Vec}(S,\textcd{windowSize}=\max\{n,m\})$\ \tcp*{Embedding Computation}
   \lnl{ln:embed} \Return $\mathcal{M}$ \tcp*{cell-to-vector model: $\mathcal{M}:T\times U\To \mathbb{R}^\gamma$}}
    \vspace{1mm}
   \SetKwProg{myfunc}{Centroid-based Selection}{}{}
   \myfunc{$(T,k,l,Q,U^*,\mathcal{M})$}{
   \nl $\name{rowVecs},\name{colVecs}\leftarrow$ empty dictionaries\;
   \lnl{ln:query} \lIf{$Q\neq\textcd{NULL}$}{$T\leftarrow Q(T)$}
   \nl $U\leftarrow$ columns of $T$\;
   \lnl{ln:avgrowsbegin}\For{$t\in T$} {
   \nl      $v\leftarrow \funct{avg_{u\in U}}{\mathcal{M}(t(u))}$\;
   \lnl{ln:avgrowsend}      $\name{rowVecs}\leftarrow\name{rowVecs}\cup \{v\mapsto t\}$\;
   }
   \nl $\mathcal{C}\leftarrow\textcd{cluster}(\name{rowVecs},k)$\;
   \nl $\tsub\leftarrow\name{rowVecs.getValues}(\textcd{centroids}(\mathcal{C}))$\;
   \lnl{ln:avgcolsbegin}\For{$u\in U-U^*$} {
   \nl      $v\leftarrow \funct{avg_{t\in T}}{\mathcal{M}(t(u))}$\;
   \lnl{ln:avgcolsend}      $\name{colVecs}\leftarrow\name{colVecs}\cup \{v\mapsto u\}$\;
   }
   \nl $\mathcal{C}\leftarrow\textcd{cluster}(\name{colVecs},l-\card{U^*})$\;
   \nl $\tsub[U]\leftarrow U^*\cup\name{colVecs.getValues}(\textcd{centroids}(\mathcal{C}))$\;
   \nl $\tsub \leftarrow \Pi_{_{\tsub[U]}}\tsub$\;
   \nl\Return $\tsub,\tsub[U]$\;}

\caption{\subtab{} Algorithm for Sub-Table Selection.}
 \label{alg:tsub}
\end{algorithm}

Our algorithm, shown in Algorithm~\ref{alg:tsub}, includes two parts: (1) Pre-processing, in which we compute a vector representation for each cell in the full table $T$ using \textit{table embedding}, and (2) centroid-based sub-table selection, which utilizes the embedded vectors to quickly select a sub-table.
Importantly, pre-processing  is performed only once, when the table $T$ is loaded,
and centroid selection is performed  for each exploratory query that the analyst performs over the table $T$ to produce the corresponding sub-table.

\paragraph*{Pre-Processing}
Given a raw table $T$, the first step  

is to normalize the values (e.g., remove illegal characters) and bin continuous columns so that values are replaced by their bin name (e.g., binning splits the  \tcol{Distance} column into short, medium and long distances). Let $\tilde{T}$ be the normalized, binned table. 

We then use a {\em table embedding process} to generate a real-valued vector for each cell in the table $\tilde{T}$. Note that while several recent works~\cite{embeddings_integration_tasks, deng2020turl, tang2020rpt, Table2Vec2019,Cappuzzo_2020}
suggest more complex methods for embedding tabular data, e.g., based on graph representations, auto-encoders, etc., we use a simpler yet effective method that quickly computes the vector representations based on~\cite{bordawekar2019exploiting}. (See  Section~\ref{sec:quality_analysis} 
for a comparison with~\cite{Cappuzzo_2020}).

Our embedding method transforms the table into a corpus of \textit{sentences} and then uses a fast implementation of word embedding~\cite{mikolov2013distributed}.
The corpus consists of \textit{tabular sentences} in which each cell in the table $T$ represents a single word. We use two types of sentences: \textit{tuple-sentences}, containing values in each tuple $t\in T$, and \textit{column-sentences} that cover the values in $T(u),\forall u \in U$.
To achieve even faster execution times, we limit the corpus size to $100K$, where the sentences are chosen uniformly at random. 

Using the corpus of tabular sentences we then train a word-embedding~\cite{mikolov2013distributed} model that outputs a numeric vector representation for the table cells. The learned representation of the cells are based on co-occurrences of values in the same row/column, and hence capture recurring patterns which roughly correspond to association rules. 
The output of this process, as depicted in Line~\ref{ln:embed}, is a mapping between each cell in $T_{ij}$ to a corresponding, learned \textit{cell-vector} $\mathcal{M}_{ij}$.

\paragraph*{Centroid-Based Sub-table Selection}
Once the vector representations $\mathcal{M}$ are computed for each cell in $T$, we perform a fast yet effective sub-table selection based on the vectors' centroids. As mentioned above, this is done over the results of each exploratory query performed by the user.

We select a sub-table $T_{sub}$ of size $k\times l$ as follows:
First, to select the $k$ tuples in $T_{sub}$ we compute for each tuple $t\in T$ with columns $U$, a \textit{tuple-vector}, by taking the component-wise average of its corresponding cell-vectors.
Namely, we average the cell-vectors $\mathcal{M}\left(t(u_1)\right),\mathcal{M}\left(t(u_2)\right),\dots,\mathcal{M}(u_{|U|}))$

(lines~\ref{ln:avgrowsbegin}-\ref{ln:avgrowsend}).

This allows us to obtain a unified representation of each tuple in the input table. 
We then cluster the tuple-vectors into~$k$ clusters and select their centroids as the rows of $\tsub$. 
Next, to select the columns of $\tsub[U]$ we perform a similar process, creating \textit{column-vectors}, forming clusters, and finding their centroids. 

Since the columns of $U^*$ must be included in the sub-table, we exclude them from clustering, compute only $l-\card{U^*}$ clusters and then add the $U^*$  columns to the selected centroids.

\subsection{Discussion}
We conclude this section with three important observations regarding our embedding-based approach for sub-table generation, which explain why our system works well in practice despite the fact that it is not explicitly based on association rules.

First, note that since \subtab{} does not directly attempt to optimize the cell-coverage and diversity metrics,
there are no guarantees it will always obtain
high scores. 

However, as our experiments will show (Section~\ref{sec: exp}), \subtab{} computes high quality sub-tables in terms of both cell-coverage and diversity when the underlying rules in $T$ are prominent (rather than arbitrary)

We also show that directly optimizing the metrics, using either a more-feasible \textit{semi-greedy} variation of Algorithm~\ref{alg:greedy} or a \textit{Multi Armed Bandit} sampling algorithm takes over \textit{24 hours} to achieve the scores \subtab{} obtains in several seconds.

Intuitively, the reason that our embedding-based algorithm achieves good results is twofold. (1) \subtab{} achieve good cell-coverage scores since the embedded vectors, which are generated based on the frequency of co-occurrences of values in the same rows and columns, capture underlying frequent patterns -- as is also done in association rules mining. (2) We achieve a high score for diversity due to the centroid-based columns and tuples selection. 

As mentioned earlier, obtaining vector representations for table cells could be done using several different techniques~\cite{bordawekar2019exploiting,embeddings_integration_tasks,deng2020turl, tang2020rpt, Table2Vec2019,Cappuzzo_2020}.
In our experimental evaluation, we compare \subtab{} to a baseline which uses the embedding method suggested in ~\cite{Cappuzzo_2020}, which is geared towards data integration tasks.
We show that not only does \subtab{} complete the pre-processing phase 26X faster (90 seconds, rather than 40 minutes), but that it also obtains superior scores in terms of cell-coverage and diversity.

Finally, note that it is also possible to apply clustering directly on $T$, without first generating an embedded representation. This method is also inferior to \subtab{} as 
it relies on a ``one-hot-encoding'' of the data, which does not capture the underlying patterns as well as the embedding-based method (see Section~\ref{sec: exp}).

\section{Experiments}
\label{sec: exp}

We performed an extensive experimental evaluation of \subtab{} in terms of both the quality and usefulness of the resulted sub-tables as well as its running-times.
After describing the experimental setup, we report our results. First, to test the sub-table quality (Section~\ref{sec:quality_analysis}), we conducted two sets of experiments: (1) a twofold user study, where the participants not only rank the usefulness of the sub-tables through a questionnaire, but are also required to list insights and conclusions they identify in each sub-table; and
(2) an offline, simulation-based study, in which we retraced real-life analysis sessions, generated a sub-table for each exploratory query, then checked whether the parameters of the next query in the session (e.g., selection term, aggregation column) appear in the sub-table. We also examined whether our combined metric of cell-coverage and diversity is correlated with these quality evaluations.

The final two experiments measure the running-time of \subtab{} for each dataset (Section~\ref{sec:running_times}), and the performance using different parameter settings for the packages used in \subtab{} (Section~\ref{sec: parameters_tuning}). 
A summary of findings from the experiments are in Section~\ref{sec: findings}.

\subsection{Experimental Setup}\label{sec:setup}

\subtab{} is implemented in Python~3.8
as a local Python library that hooks into Pandas~\cite{pandas} and therefore can be used, e.g.\  in common EDA environments such as Jupyter notebooks.
The binning method used is based on kernel density estimation and is implemented with \emph{sciPy}\footnote{https://scipy.org/}. The Word2Vec embedding method is implemented by \emph{gensim}\footnote{https://radimrehurek.com/gensim/}. Centroid selection is performed by creating clusters via KMeans using \emph{sklearn}\footnote{https://scikit-learn.org/}.The experiments were run on Intel Xeon CPU based server with 24 cores and 96 GB of RAM.

\paragraph*{Metrics implementation}
As the cell coverage metric relies on association rules, we provide details about how they are mined. We compute the association rules using the Apriori Algorithm \cite{agrawal1994fast} and implement it using \emph{efficient-apriori}\footnote{https://pypi.org/project/efficient-apriori/}; we set the main parameters, support and confidence, to $0.1$ and $0.6$, respectively, and the minimum rule size to 3. In Section \ref{sec: parameters_tuning}, we conduct several experiments that test the effect of each of the parameters, by varying the given parameter while setting all others to their default values. 
When target columns are selected by the user, the data is split according to the binned values of the target columns.  The rules are then mined over each subset separately. An optional extension to our system is coloring the patterns (association rules) in the data that are represented by the sub-table, as illustrated in Figure~\ref{fig:miniex}; this was found very helpful by participants in the user-study (Section \ref{sec:quality_analysis}). 
Lastly, to implement our combined cell coverage and diversity metric, we take $\alpha = 0.5$ by default for the combined score, assigning equal weights to cell coverage and diversity.

\begin{table}[]
    \centering
    \begin{tabular}{c|c|c|c}
        Metric & SubTab & RAN & NC \\
        \hline

         \# correct insights & \textbf{4} (85\%) & 1.2 (30\%)& 0.2 (6\%)\\
         \% of users with no insights & \textbf{0}\% & 12\% & 89\%\\
         \# Total insights &	\textbf{4.5}&	3.67&	1.5\\

    \end{tabular}
    \caption{Results of the user study }
    \label{tab:Results of the user study}
\end{table}

\begin{figure}
 \includegraphics[width=0.8\columnwidth] {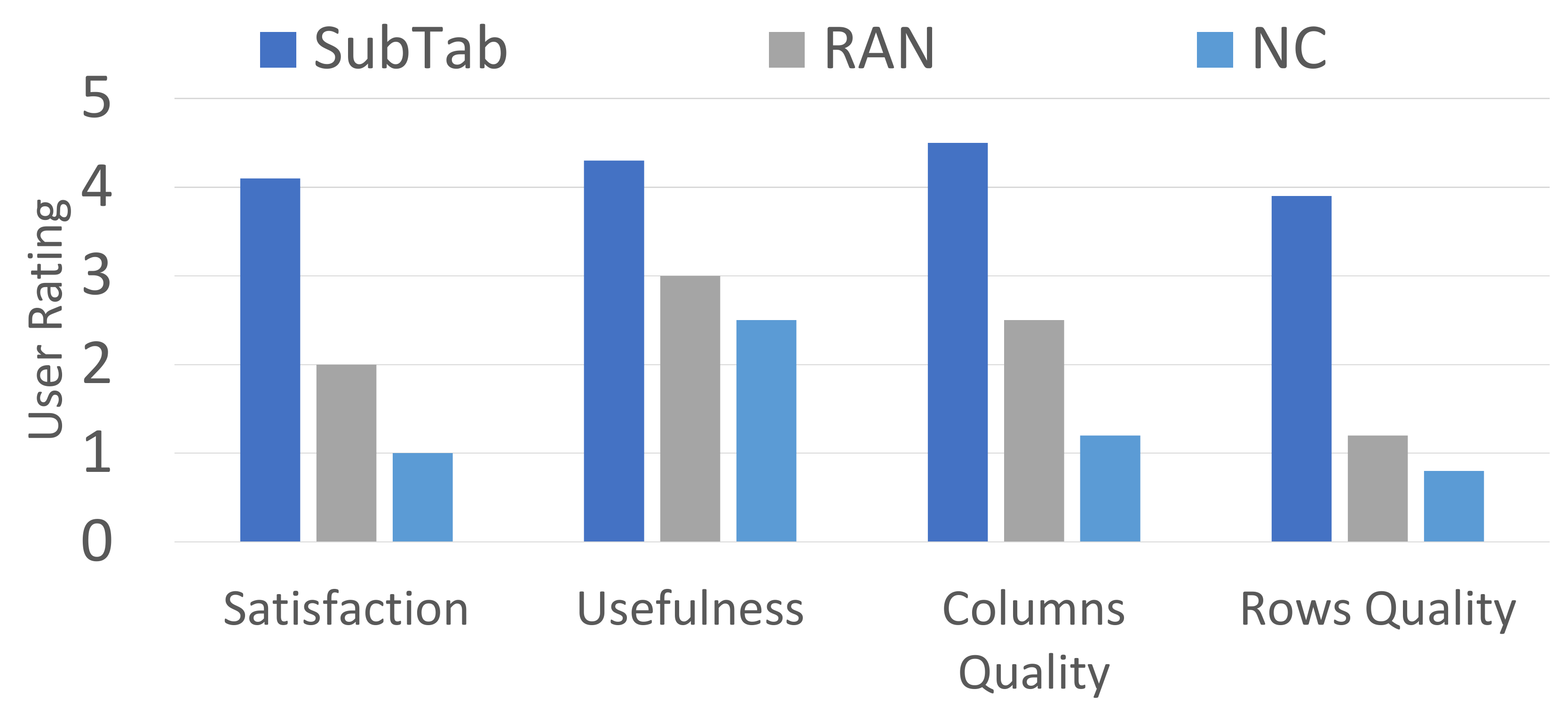}
\caption{Users overall ratings}
\label{fig:user_ranking} 
\end{figure}

\begin{figure}
    \centering
    \includegraphics[width=0.8\linewidth] {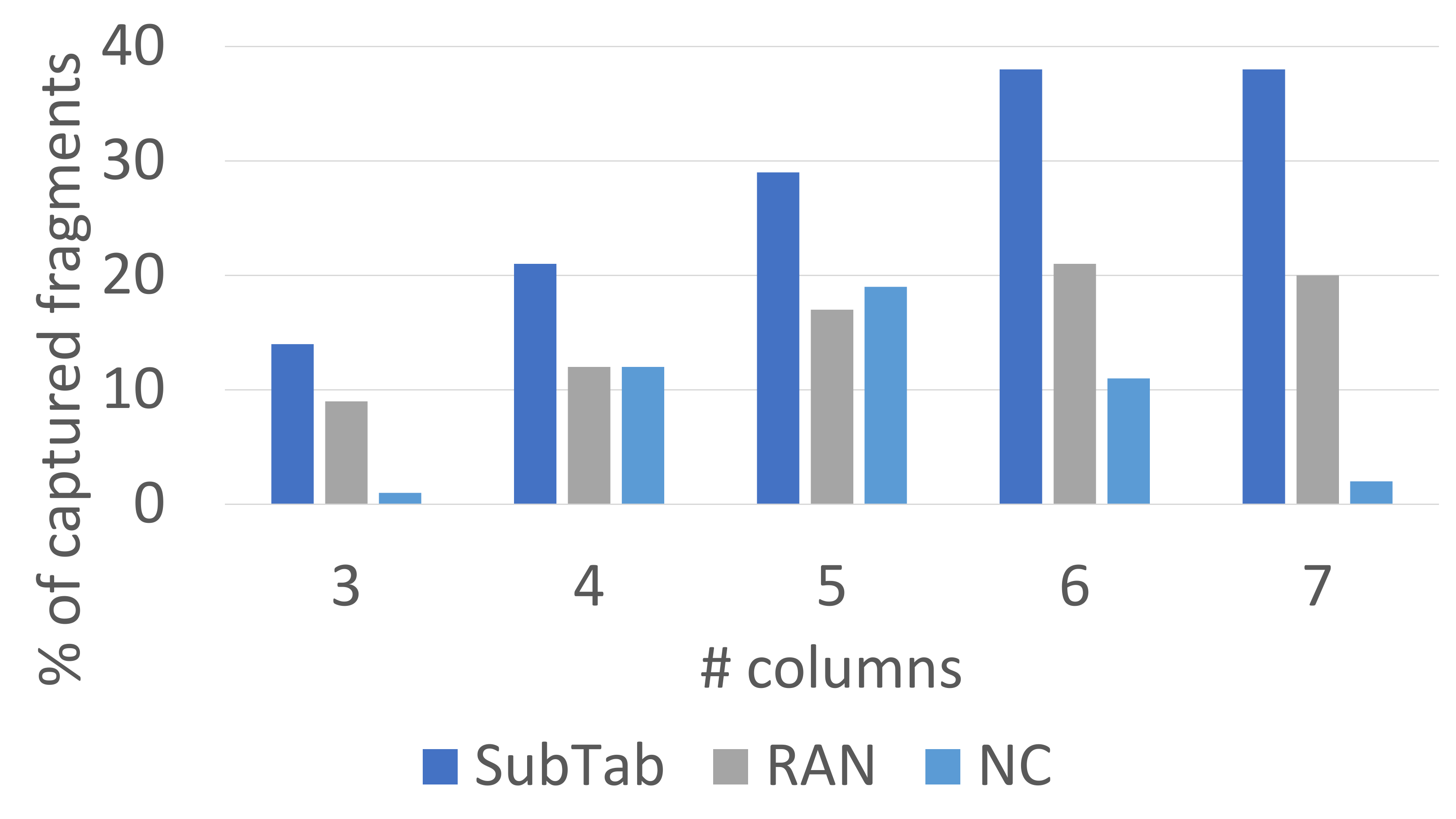}
    \caption{Simulated Experiments Results ($CY$)}
    \label{fig:user_study_cyber_cor}    
\end{figure}

\paragraph{{Datasets.}}
To demonstrate the performance of \subtab{} in different domains, we used the following datasets: 
\begin{itemize}
    \item Flights ($FL$), \footnote{https://www.kaggle.com/usdot/flight-delays?select=flights.csv} which has 6M rows and 32 columns
    \item Cyber-security 
    ($CY$)\footnote{https://www.honeynet.org/challenges/}, which has 30K rows and 15 columns 
    \item Spotify ($SP$), \footnote{https://www.kaggle.com/c/bfh-spotify-challenge/data} which has 42K rows and 15 columns, and 
    \item Credit card  frauds ($CC$), \footnote{https://www.kaggle.com/mlg-ulb/creditcardfraud} which has 250K rows and 31 columns.
    \item US Funds ($USF$), \footnote{https://www.kaggle.com/stefanoleone992/mutual-funds-and-etfs?select=MutualFunds.csv} which has 23.5K rows and 298 columns. 
    \item Bank Loans dataset (BL) \footnote{https://www.kaggle.com/panamby/bank-loan-status-dataset} with 110K rows 19 columns.
\end{itemize}

\paragraph{{Baselines.}}
We have tested two types of baselines. 
The following baselines support fast response time and are suitable for EDA. They were used in our quality analysis and user study.
\begin{enumerate}
    \item Random ($RAN$): select uniformly at random~$k$ rows and~$l$ columns. To increase the quality of this baseline, we iteratively repeat the random selection for one minute, and return the sub-table with highest score among all the randomly drawn sub-tables. 
    \item Naive clustering ($NC$): 

   We first transform the categorical and textual columns to be continuous values using one-hot encoding\footnote{https://scikit-learn.org/stable/modules/generated/sklearn.preprocessing.OneHotEncoder.html}. Then we treat each row as a vector of length $m$ (the total number of columns), and cluster the vectors using K-means.  The centroids of those clusters are used as the rows in the sub-table. We select the columns analogously.

\end{enumerate}

The second set of baselines that we test are too slow to be used in an interactive setting, having  running times of over~30 minutes.
We nevertheless use them for comparing the \emph{quality} of our system. 
\begin{enumerate} \addtocounter{enumi}{3}

    \item Multi-Armed Bandit ($MAB$): we use a version of the Multi-Armed Bandit algorithm \cite{intro_MAB} that, in each iteration, selects a set of $n$ rows and $k$ columns and evaluates the sub-table using our metric.  The reward (i.e.\ the cell coverage score) is given to all the columns and rows that participated in the sub-table, and the exploration-exploitation method used is Upper Confidence Bound (UCB)~\cite{UCB_MAB}. 
    \item Greedy sub-table selection (Greedy): We modify the greedy algorithm outlined in Algorithm \ref{alg:greedy} by traversing the column combinations in random order
    (line~\ref{ln:enum}). We can thus halt the algorithm after any number of iterations and use the sub-table with maximal score among the ones found until that point. For our experiments, we use a time limit of 5 hours, which we empirically found to be required for discovering informative sub-tables.

    \item $EmbDI$:\cite{Cappuzzo_2020}: This algorithm creates a local embedding (inspired by Node2Vec) that is effective for data integration tasks in relational databases. The table is transformed into a graph by representing the columns and rows as nodes, connected by edges which capture structural relationships.  The structure created provides a more efficient graph computation than the naive graph transformation. 
    Thus, it captures relationships inherent in the tables, and provides a more efficient computation than the naive graph transformation. 
\end{enumerate}

\subsection{Quality Analysis} \label{sec:quality_analysis}

To assess the quality of \subtab{}, we  evaluated the sub-tables generated by \subtab{}, using several experiments: (1) We performed a ``live'' user-study, where participants used \subtab{} as well as other baselines in real-life analysis tasks;  (2) Offline simulation-based evaluation, where we retraced completed EDA sessions and assessed the usability of a sub-table by determining if it contains elements of the subsequent query; and (3) Evaluation based on our metrics -  to further compare \subtab{} to the baselines, we evaluated the generated sub-tables in terms of our cell coverage and diversity metrics to test whether our metric score is highly correlated with the user-study and simulation-based evaluation. In this part, we also compare our results to slower baselines that could not be tested in the live user study.
These experiments are discussed below.

\subsubsection{User Study}\label{sec:userstudy}
We conducted a user study to compare the usefulness of \subtab{} to the baseline approaches.
We recruited 15 participants, all with varying degree of expertise in data analysis using Pandas. The participants were first asked to perform an actual data-analysis task in which they used the sub-table to discover insights about a dataset, and then answer a short questionnaire about the quality of the sub-tables. We divided the participants to equal groups, each worked with a different baseline on three different datasets: $SP$, $FL$ and $BL$ . As this is a ``live'' experiment, we compared \subtab{} with the baselines $RAN$ and $NC$, which are fast enough for interactive analysis. 
 
For each dataset there was an exploration task involving several queries. Each user was given one baseline, and performed explorations over all three datasets. The user's goal was to write down insights that are relevant to the given task while examining the sub-tables that were created during the exploration. We then counted the number of correct insights that users had, and averaged the results for each exploration task, per user. 
To help the users in their exploration task, we also colored the patterns (association rules) that were captured in the sub-table for all the baselines, using the $SP$ and $FL$ dataset. In the $BL$ exploration, we did not color the patterns (in any baseline), displaying only the created sub-table without additional information. By doing this, we tested whether the trends of the exploration task without coloring remains the same as those with coloring.

\paragraph{Insights Discovery Experiment} 
For each dataset, the participants were presented a notebook containing several exploratory queries, and a \textit{sub-table} of the queries' results (generated by either \subtab{}, $RAN$, or $NC$). The participants were then instructed to examine each notebook and derive insights about the dataset's particular analysis task. For example, the task in $SP$, containing data about songs and their popularity in the Spotify streaming service, was to discover \textit{``what makes songs popular''}. 
We then manually evaluated the \textit{correctness} of the participants' insights, and removed ones that were statistically incorrect or highly irrelevant to the analysis task.

Table \ref{tab:Results of the user study} shows the total number of insights, percentage of correct insights, as well as the percentage of users who did not derive insights at all, averaged across all three datasets.
First, see that when using \subtab{}, users derived an average of $4$ correct insights per dataset, which is 3X more than $RAN$ and 12X more than $NC$.
Interestingly, the percentages of correct insights obtained by \subtab{} is 85\% whereas only 30\% and 6\% of the insights obtained via $RAN$ and $NC$ (resp.) were correct.
Inspecting the incorrect insights, we observed that the users reached false conclusions since many of the sub-tables produced by $RAN$ and $NC$ were \textit{misleading}. These sub-table contained, for example, non-representative distribution of columns, or presented a random, false correlation between columns. 
Last, see that using \subtab{} 100\% of users were able to successfully finish the analysis task, whereas none of the users failed to derive at least one insight.

\begin{figure}
    \includegraphics[width=0.9\columnwidth] {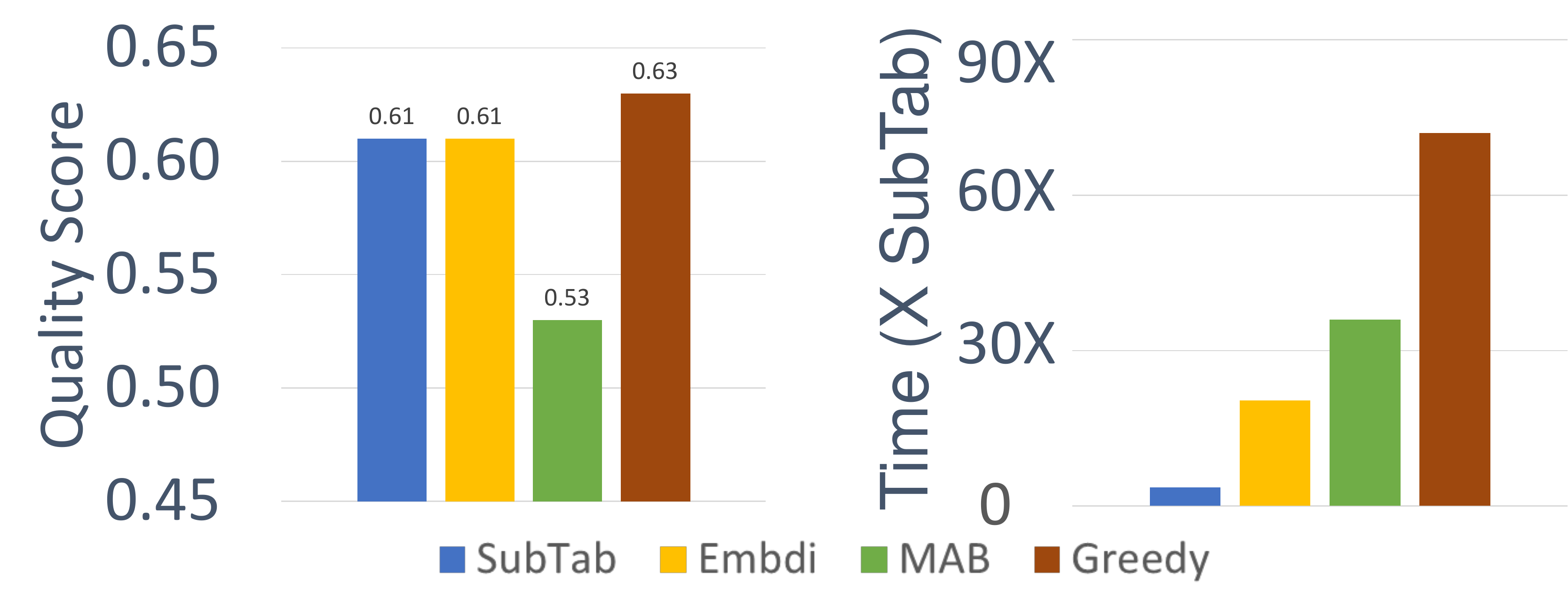}
    \caption{(a) Quality score and (b) Total running time}
    \label{fig:running_time_And_quality}    
\end{figure}  

\begin{figure*}
     \centering
     \begin{subfigure}[b]{0.32\linewidth}
         \centering
         \includegraphics[width=\linewidth]{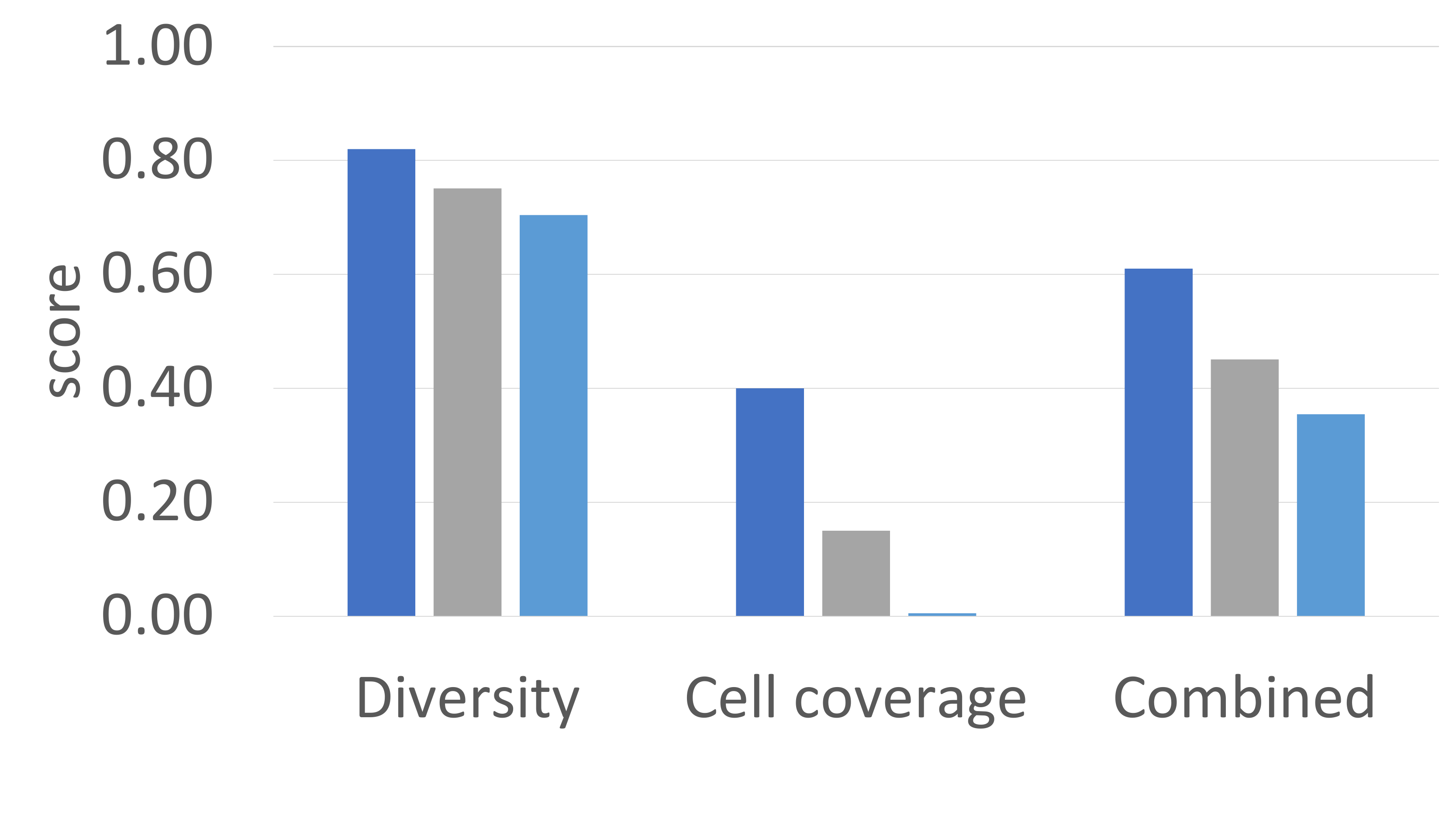}
         \caption{FL}
         \label{fig:quality_flights}
     \end{subfigure}
     \begin{subfigure}[b]{0.32\linewidth}
         \centering
         \includegraphics[width=\linewidth]{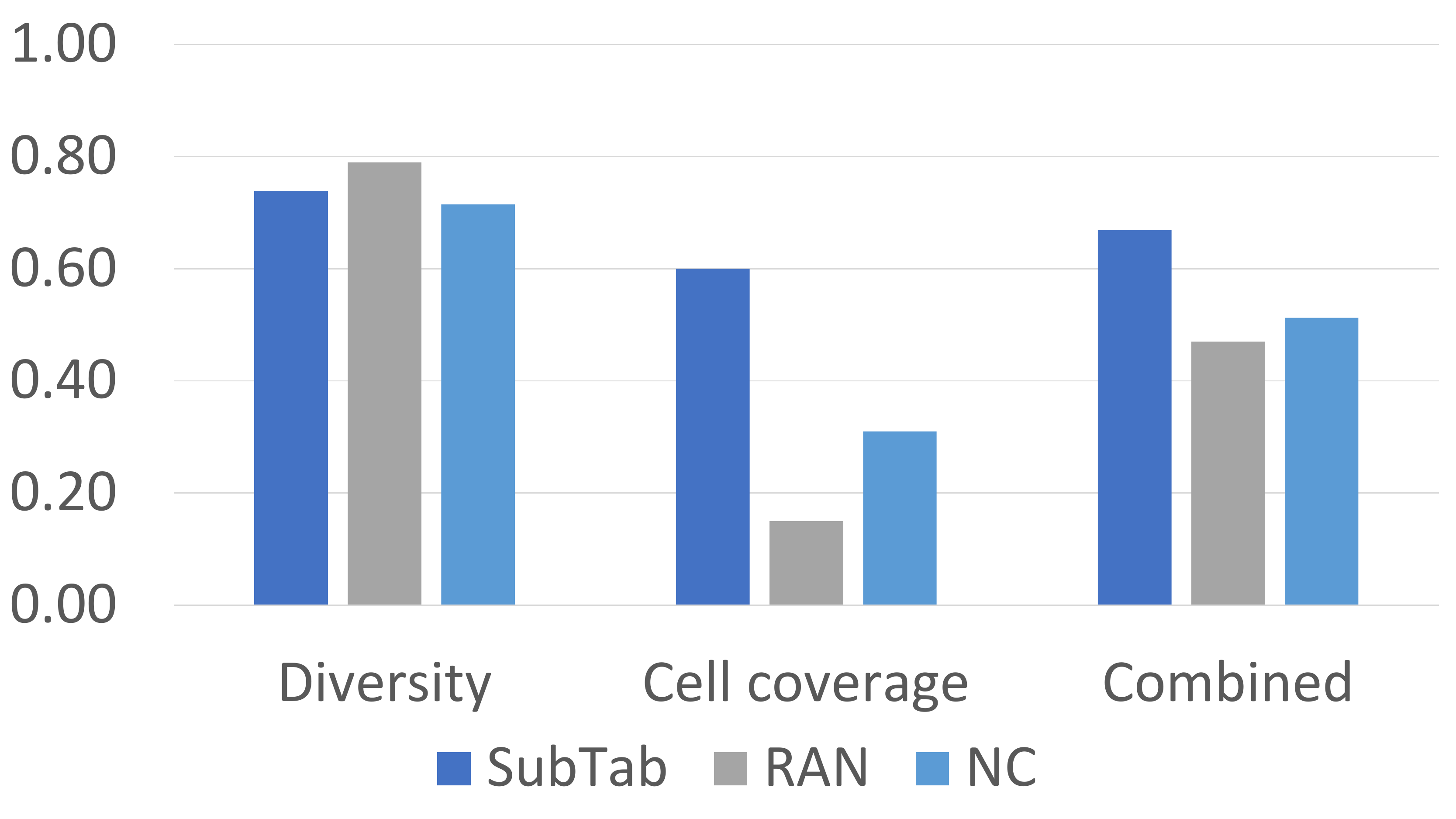}
         \caption{SP}
         \label{fig:quality_music}
     \end{subfigure}
     \hfill
     \begin{subfigure}[b]{0.32\linewidth}
         \centering
         \includegraphics[width=\linewidth]{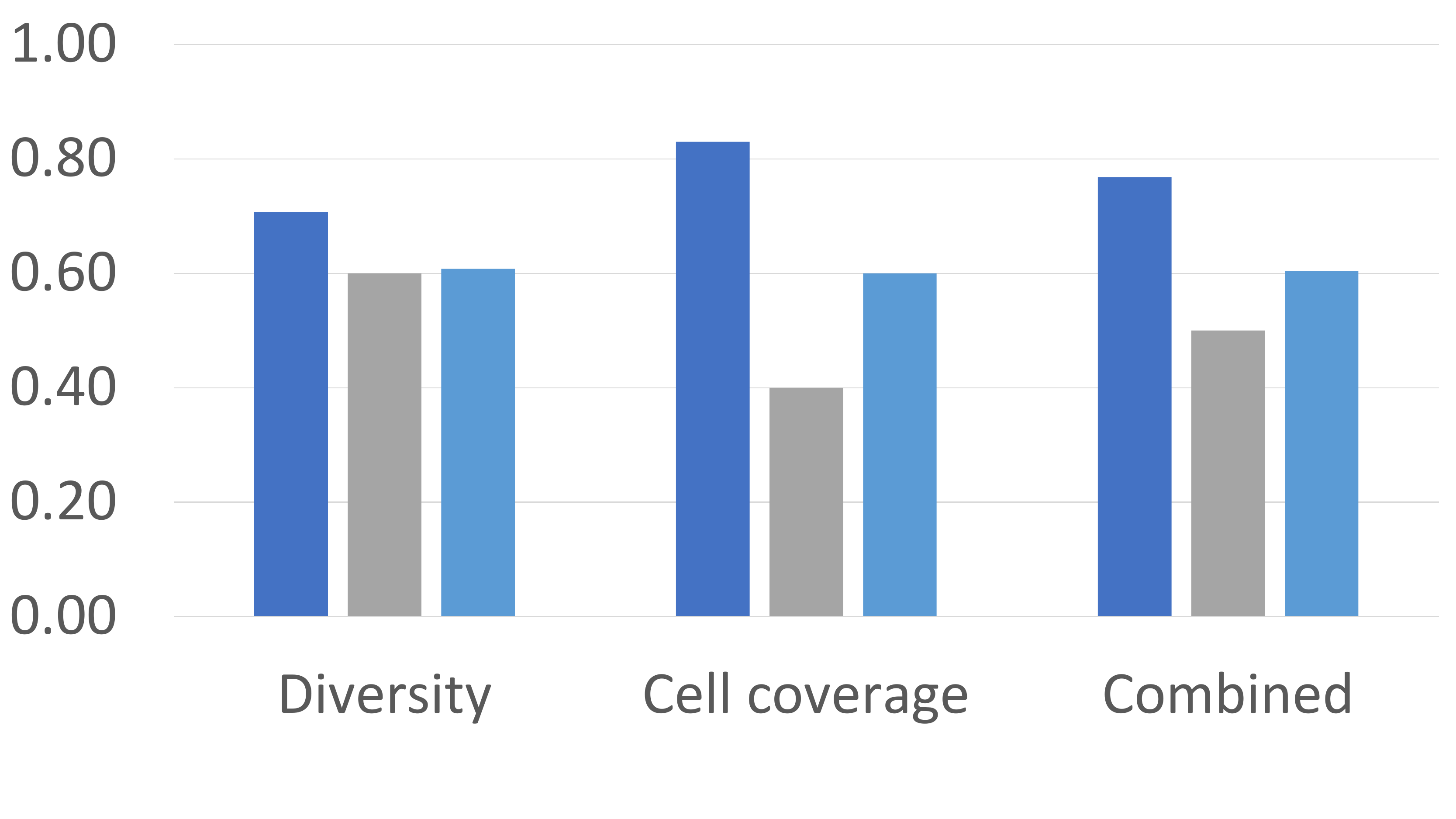}
         \caption{CY}
         \label{fig:Quality_Cyber}
     \end{subfigure}
        \caption{quality metrics for different baselines and datasets }
        \label{fig:quality_all}
\end{figure*}

\paragraph{Questionnaire Results.} 
As mentioned above, after the participants completed the insights discovery tasks, they were each given a short questionnaire asked to evaluate the sub-tables on a scale of 1 (strongly disagree) to 5 (strongly agree) according to the following statements:

\begin{itemize}
    \item Q1: The presented system is better than the standard dataframe sub-table.
    \item Q2: Would you like to use the sub-table system in future data exploration tasks?
    \item Q3: The sub-tables' columns were relevant to the queries.
    \item Q4: The sub tables' rows are representative and capture patterns.

\end{itemize}

The questionnaire results are summarized in Figure~\ref{fig:user_ranking}.
Note that the average users' ranking of \subtab{} is above 4 for all statements, and are significantly higher than $RAN$ and $NC$.

\subsubsection{Simulation-Based Study.}\label{sec:simulation}
We conducted an additional offline experiment, in order to further evaluate the quality sub-tables. We used a publicly-available collection of 122 data exploration sessions~\cite{react_EDA}, containing select, project, group-by, and sort operations over the dataset $CY$.
To evaluate the potential usefulness of the sub-tables we replayed each query in a session, and generated a corresponding sub-table using \subtab{} and the baselines $RAN$ and $NC$.
We then examined whether the next query in each session contain a fragment (e.g., a group-by attribute, selection term, etc.) that appears in the sub-table of the previous query's results. 
Intuitively, appearance of next-query fragments in the sub-table, may imply that the sub-table is useful in selecting the next exploration step.

The percentage of captured query fragments are shown in Figure \ref{fig:user_study_cyber_cor}, when varying the width (i.e., number of columns) of the sub-table from 3 to 7 (out of the 12 columns of the $CY$ dataset). See again that \subtab{} significantly outperforms the baselines, and is able to predict 14\% (width=3) to 38\% (width=7).
Naturally, the results improve as the sub-table covers more columns, however it is still very difficult to cover all query fragments (e.g., practically any value from a column's value domain can be used as a selection term).

\subsubsection{Quality by Our Metrics} 
So far we have evaluated the quality of sub-tables based on their performance in external tasks. We now compare to our intrinsic quality metrics: cell coverage, diversity and combined score.

\paragraph*{Comparison to the interactive baselines} Figure~\ref{fig:quality_all} shows the three scores for the three baselines over the FL, SP and CY datasets. For the three datasets, \subtab{} achieves a significantly higher cell coverage and combined scores, compared with the baselines. Interestingly, in FL and CY it also achieves a higher diversity score, which means it outperforms the baselines for any choice of $\alpha$. In SP, RAN has a slightly better diversity score, but its cell coverage is extremely low. For example, for the $SP$ dataset, \subtab{} achieves a total score of 0.68, were the $RAN$ and $NC$ achieve 0.47 and 0.51 respectively.

To connect between intrinsic and external metrics, we now compare their ranking of baselines. In our user study (Section~\ref{sec:userstudy}), we compute the combined score for each presented sub-table, and average per baseline. The resulting average scores for \subtab{}, RAN and NC were $0.56$, $0.32$ and $0.15$ respectively, which matches the ranking of these baselines in terms of user ratings (Figure~\ref{fig:user_ranking}). Similarly, for the simulation based study (Section~\ref{sec:simulation}), we have computed the combined score for each computed sub-table, and averaged per baseline and per sub-table size. The resulting ranking between baselines per sub-table size was identical to the ranking according to the percentage of matched steps (Figure~\ref{fig:user_study_cyber_cor}).
This indicates that our metrics correlate with human judgements and with the usefulness of sub-tables in EDA sessions.

\paragraph{Comparison to slower baselines}
We have also executed the slower, non-interactive baselines over the $FL$ dataset, and show, in Figure~\ref{fig:running_time_And_quality}, their performance in terms of quality and time and compared with \subtab{}. \subtab{} achieves the same combined score as the EmbDI baseline; however, the latter takes~40 minutes to execute, whereas \subtab{} requires only~1.5 minutes. MAB achieves the worst quality, even though it is executed for a long time. Finally, the Greedy baseline slightly outperforms the other baselines in terms of quality, but is the slowest one -- this score was achieved by executing it for 48 hours on a multi-process architecture. Overall, this shows that \subtab{} computes high-quality tables at interactive speed.

\subsection{Effectiveness of Pre-processing} \label{sec:running_times}

Recall that \subtab{} has two distinct steps:  Pre-processing, which is executed once upon loading a data table, and Selection, which is executed with each \subtab{} display, both for the table itself and for queries over it (see figure~\ref{fig:arch}). Figure~\ref{fig:running_time} shows the execution times for each step over different datasets. Pre-processing takes the longest time, 90~seconds, for the CC dataset,  although it is smaller than FL. The reason is that this data contains only numeric columns that must undergo binning. Still, this is a reasonable time for the set-up phase of an EDA session. Then, the Selection phase takes only a few seconds for all the datasets. We have tested the computation time for various sub-table sizes, and the results were similar (the difference is less than $10\%$). This shows that  our reuse of embeddings is indeed effective in achieving fast response time.

\begin{figure}
 \includegraphics[width=0.7\columnwidth] {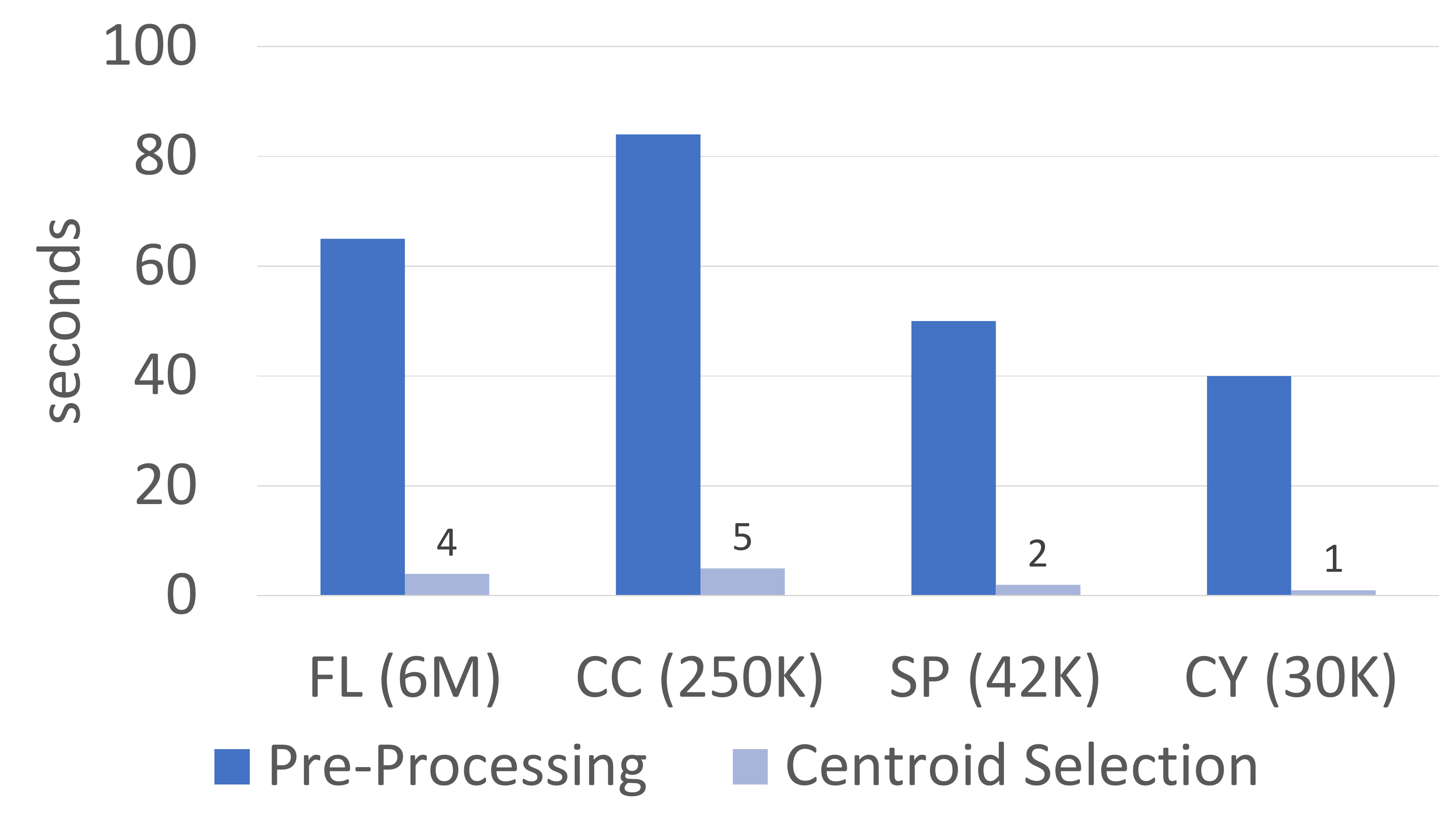}
\caption{Average running time of \subtab{}}
    \label{fig:running_time}    
\end{figure}

\begin{figure*}
\centering
     \begin{subfigure}[b]{0.32\linewidth}
         \centering
         \includegraphics[width=\linewidth]{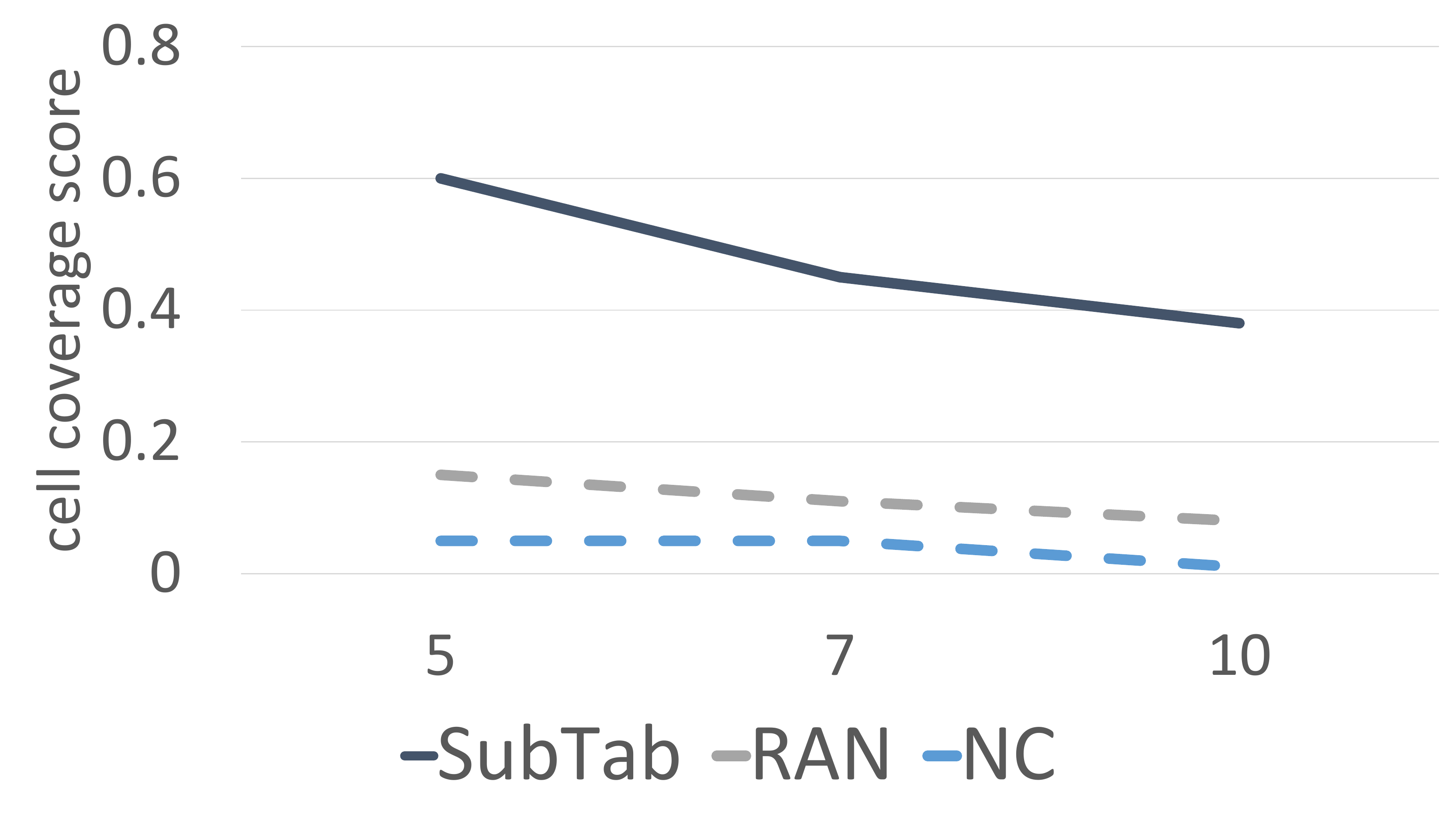}
         \caption{\# Bins}
         \label{fig:binning_methods}
     \end{subfigure}
     \hfill
     \begin{subfigure}[b]{0.32\linewidth}
         \centering
         \includegraphics[width=\linewidth]{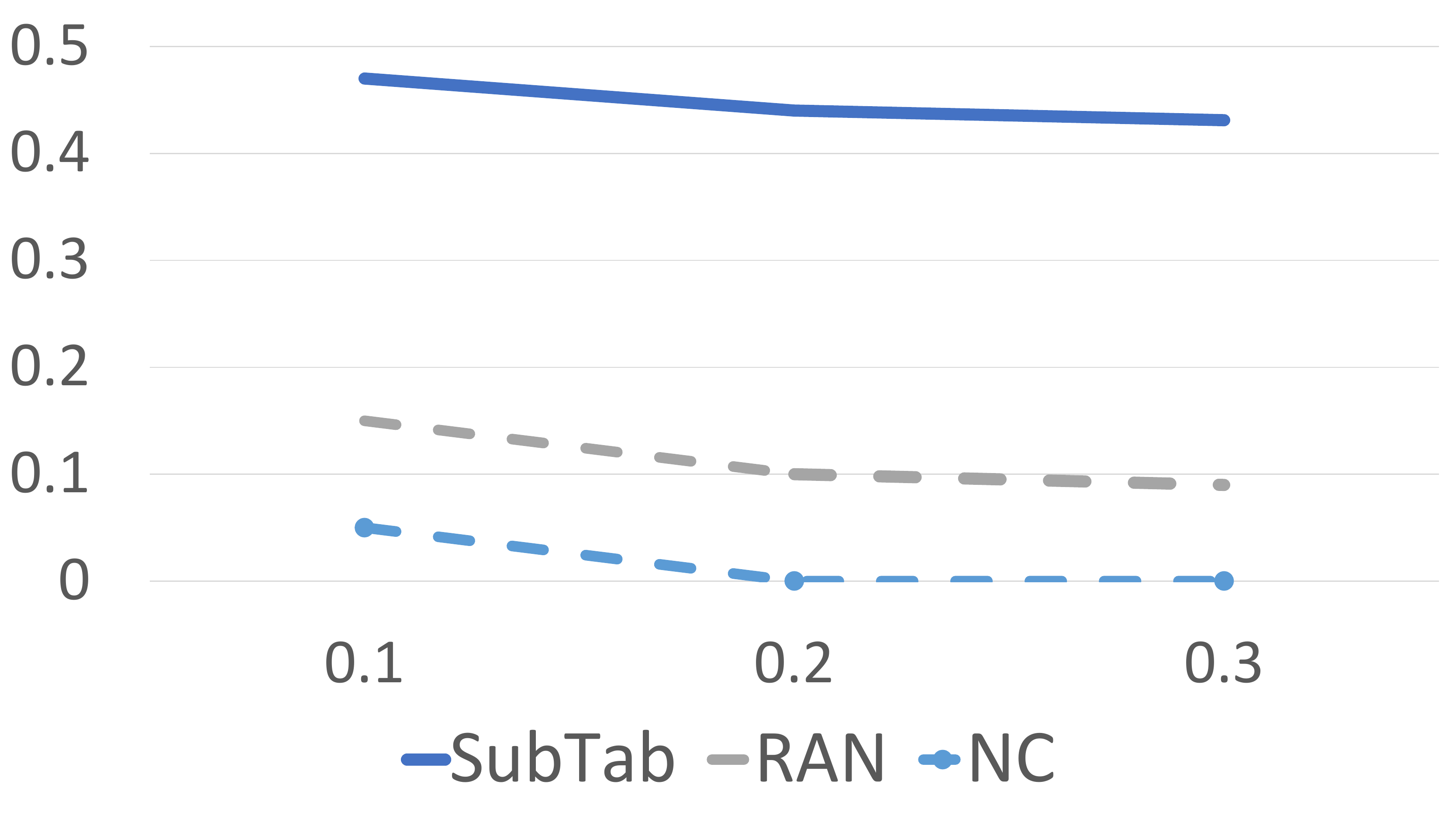}
         \caption{Support Threshold}
         \label{fig:support}
     \end{subfigure}
          \begin{subfigure}[b]{0.32\linewidth}
         \centering
         \includegraphics[width=\linewidth]{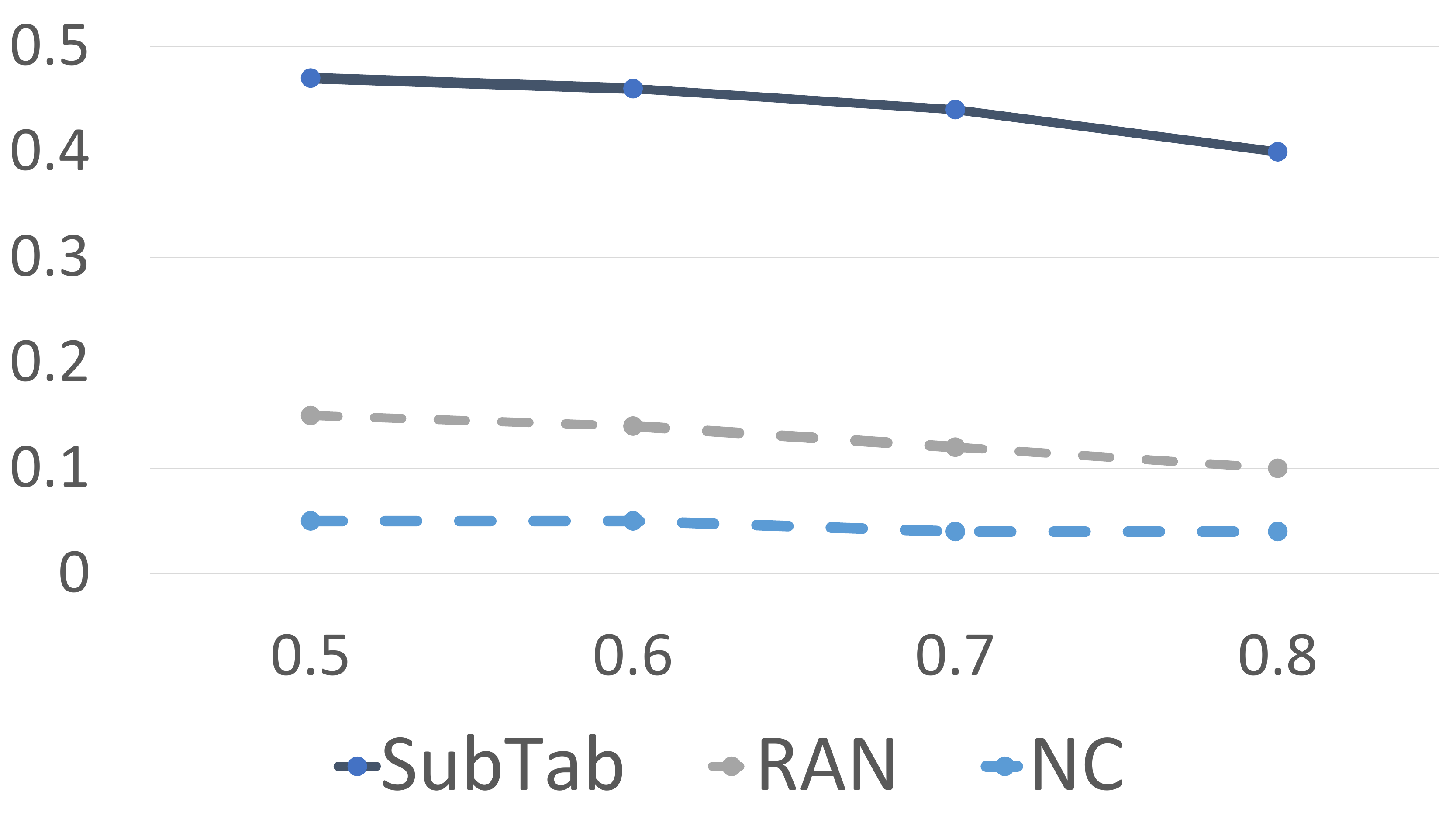}
         \caption{confidence Threshold}
         \label{fig:confidence}
     \end{subfigure}

    \caption{Parameter tuning experiment}
    \label{fig:Parameters_Tuning}    
\end{figure*}

\subsection{Parameter Tuning} \label{sec: parameters_tuning}
Our metric of cell coverage (formula~\ref{formula:coverage}) is measured with respect to an input set of association rules. We have explained above, in Section~\ref{sec: alg}, why the embeddings used by our solution implicitly capture the data patterns defined by prominent association rules. We now unwrap the imprecise notion of ``prominence'' by considering parameters that affect the set of mined association rules, and showing that \subtab{} performs well for varying values for these parameters. The averaged results for $FL$ and $SP$ datasets are shown in Figure~\ref{fig:Parameters_Tuning}. In each graph we vary one parameter, and use the default value for the others. Note that the sub-tables computed by each of the algorithms \emph{are the same} across all settings, since these algorithms do not use the association rules as input; the variation in parameters only affects the means of evaluating the sub-tables.

The first parameter that we consider is the number of bins per binned column (the default number is~5). A larger number of bins would imply more association rules (since there are more bin combinations) but with lower significance, i.e., they would hold for less tuples. In Figure~\ref{fig:binning_methods} we can see that the cell coverage achieved by \subtab{} is much higher than the other baselines, and that this score moderately decreases for all three with the increase in number of bins. Indeed,  if association rules hold for less tuples, a sub-table would need to cover more rules in order to describe the same amount of cells.

Next, recall that \emph{support}~\cite{agrawal1994fast} reflects the ratio of tuples to which an association rule applies. By default, we set the minimum threshold for support to be~0.1, i.e., we are interested only in rules that hold for at least 10\% of the tuples. In Figure~\ref{fig:support} we vary the support threshold, and observe it only leads to a minor decrease in cell coverage, for the three algorithms. Intuitively, if we increase the threshold, less rules pass it, but these rules are prominent, so the embedding is still likely to capture them. 

The \emph{confidence} of an association rule measures the strength of the connection between its parts, i.e., the ratio, among all tuples for which the left-hand-side holds, of tuples where the right-hand-side holds as well. We have also varied the confidence threshold for association rules (by default, it was~0.1) in Figure~\ref{fig:confidence}. The observed trends are similar to the varying of support.

These results show the robustness of our approach across different properties of the association rules against which they are evaluated. In particular, the ranking between algorithms, and the relative gap between their scores is preserved across settings.

\balance{}
\subsection{Findings}\label{sec: findings}
Results of these experiments highlight the quality of sub-tables computed by \subtab{}, and show that they exceed those of other interactive algorithms and are comparable even to algorithms that directly optimize our metrics or use time-consuming state-of-the-art embedding methods. They also show that, unlike the baselines, \subtab{} is suitable for an interactive setting.  

From a usability perspective, experiments with pre-recorded EDA sessions show that \subtab{} outperforms other baselines by more frequently including columns and rows that were later used. The user study confirms that, as compared to the baselines, our sub-tables more frequently help data analysts derive useful insights from the data and increase user satisfaction.

Finally, the results also indicate that our metrics of sub-table quality are sound and robust, and correlate with external means of evaluating sub-table quality.

\section{Conclusion and Future Work}\label{sec:conc}
This paper presents \subtab{}, a framework for creating small, informative sub-tables of large data tables to facilitate the first step in data analytics:  data exploration. Given a larger table, \subtab{} creates a sub-table, with a small subset of rows of the table projected over a small subset of columns, that could be explored manually by the analyst. The rows and columns are chosen as representatives of prominent data patterns within and across columns in the input table. \subtab{} can also be used for query results, enabling the user to quickly understand them and determine subsequent queries.

There are several directions for future research. Our current work considers only single dataset as input.  
We could consider handling multiple datasets, as well as optimizing sub-table computation for operations over multiple tables such as joins. Another intriguing future direction is creating sub-tables for other data science tasks, such as visualization and training ML models, both for the supervised and unsupervised setting. 

As we extend our approach to different tasks, one could explore different methods for table embedding, that could be performed offline or within a longer period. Finally, there are many other challenging variants of sub-table computation, e.g., computing sub-tables that meet certain fairness requirements with respect to the data they represent.

\newpage
\bibliographystyle{ACM-Reference-Format}
\bibliography{sigproc} 

\end{document}